\newtheorem{theorem}{Theorem}[section]
\newtheorem{lemma}[theorem]{Lemma}
\newcommand{\cross}{\times}
\newcommand{\UN}{\mathsf{UN}}
\newcommand{\IN}{\mathsf{in}}
\newcommand{\OUT}{\mathsf{out}}
\newcommand{\tw}{\mathsf{tw}}
\newcommand{\dtw}{\mathsf{dtw}}
\newcommand{\ndp}{\text{NDP}\xspace}
\newcommand{\edp}{\text{EDP}\xspace}
\newcommand{\anf}{\text{ANF}\xspace}
\newcommand{\sdndp}{\text{Sym-Dir-NDP}\xspace}
\newcommand{\sdanf}{\text{Sym-Dir-ANF}\xspace}
\title{Routing Symmetric Demands in  Directed Minor-Free Graphs\\
with Constant Congestion}
\author{
Timothy Carpenter\thanks{Dept.~of Computer Science \& Engineering, The Ohio State University, \texttt{carpenter.454@osu.edu}.}
\and
Ario Salmasi\thanks{Dept.~of Computer Science \& Engineering, The Ohio State University, \texttt{salmasi.1@osu.edu}.}
\and
Anastasios Sidiropoulos\thanks{Dept.~of Computer Science, University of Illinois at Chicago, \texttt{sidiropo@uic.edu}.}
}
\date{November 3, 2017}
\begin{document}

\clearpage
\maketitle
\thispagestyle{empty}

\begin{abstract}
The problem of routing in graphs using node-disjoint paths has received a lot of attention and a polylogarithmic approximation algorithm with constant congestion is known for undirected graphs [Chuzhoy and Li 2016] and [Chekuri and Ene 2013].
However, the problem is hard to approximate within polynomial factors on directed graphs, for any constant congestion [Chuzhoy, Kim and Li 2016].

Recently, [Chekuri, Ene and Pilipczuk 2016] have obtained a polylogarithmic approximation with constant congestion on directed planar graphs, for the special case of symmetric demands.
We extend their result by obtaining a polylogarithmic approximation with constant congestion on arbitrary directed minor-free graphs, for the case of symmetric demands.

\end{abstract}

\pagebreak{}
\setcounter{page}{1}

\section{Introduction}

Routing in graphs along disjoint paths is a fundamental problem with numerous applications  in various domains \cite{Aggarwal:1994:ERS:314464.314579, Awerbuch:1994:OAC:1398518.1398993, doi:10.1137/S0097539792232021, Peleg1989, Raghavan:1994:ERA:195058.195119}. Disjoint path problems have been well-studied in both the directed and undirected setting, and it is known that the directed formulations of these problems are generally more difficult to approximate \cite{DBLP:journals/corr/ChuzhoyKN16, Chuzhoy:2007:HRC:1250790.1250816}. The recent work of \cite{Chekuri2015,chekuri2016constant} has brought to light a more tractable formulation of the directed version of these problems by considering routing symmetric demand pairs with constant congestion.

Two of the most well-known and studied disjoint path problems are the node-disjoint paths problem (\ndp) and the edge-disjoint paths problems (\edp). In these problems, the goal is to connect a set of node pairs through node- or edge-disjoint paths in an undirected graph.
It is known that the decision version of \ndp is NP-complete \cite{karp1975},
and it has been shown to be fixed parameter tractable \cite{ROBERTSON199565}.
But there remain gaps in our understanding of their approximability. 
For both \edp and \ndp on $n$-node graphs, the state of the art is an $O(\sqrt{n})$-approximation \cite{Chekuri06ano}, \cite{Kolliopoulos98approximatingdisjoint-path}.
For planar graphs, a slightly better bound of $\tilde{O}(n^{9/19})$-approximation is known \cite{DBLP:journals/corr/ChuzhoyKL16}.
Even for the case of the grid, only a $\tilde{O}(n^{1/4})$-approximation for \ndp is known \cite{chuzhoy_et_al:LIPIcs:2015:5303}.
For hardness of approximation, it is known that both \ndp and \edp are $2^{\Omega\left(\sqrt{\log(n)}\right)}$-hard to approximate, unless all problems in NP have algorithms with running time $n^{\log(n)}$  \cite{DBLP:journals/corr/ChuzhoyKN16}.

Progress has been made on relaxed versions of these problems. One such relaxation is the all-or-nothing flow problem (\anf), where a subset $\mathcal{M}' \subseteq \mathcal{M}$ is routed if there is a feasible multicommodity flow routing one unit of flow for each pair in $\mathcal{M}'$. Poly-logarithmic approximations are known for \anf \cite{Chekuri:2005:MFW:1060590.1060618, Chekuri:2004:AMF:1007352.1007383}.
Another relaxation is to allow some small constant congestion on the nodes or edges. For this relaxation, poly-logarithmic approximations have been obtained for EDP with congestion 2 \cite{DBLP:journals/corr/abs-1208-1272}, and for NDP with congestion $O(1)$ \cite{Chekuri:2013:PAM:2627817.2627841}.

It is natural to extend the study of disjoint path problems to directed graphs. However, these problems are known to be significantly harder on directed graphs.
Even the case of \anf with constant congestion $c$ allowed has an $n^{\Omega(1/c)}$ inapproximability bound \cite{Chuzhoy:2007:HRC:1250790.1250816}.
However, a more tractable case is found by considering symmetric demand pairs.
The study of maximum throughput routing problems in directed graphs with symmetric demand pairs began in \cite{Chekuri2015}. 
In this setting the graph $G$ is directed, and routing a source-destination pair $(s_i, t_i)$ requires finding a path from $s_i$ to $t_i$ and a path from $t_i$ to $s_i$. We let \sdanf be the analogue of \anf, and \sdndp be the analogue of \ndp in this setting. A poly-logarithmic approximation for \sdanf is obtained in \cite{Chekuri2015}.
Subsequently, in \cite{chekuri2016constant} a randomized poly-logarithmic approximation with constant congestion on planar graphs for \sdndp is obtained.

\subsection{Our contribution}
We consider the problem of routing symmetric demands along node-disjoint paths in directed graphs. We refer to this problem as \sdndp. Letting $G = (V, E)$ be a directed graph with unit node capacities and $\mathcal{M} = \left\{(s_1, t_1), \ldots, (s_k, t_k) \right\} \subseteq V \times V$ be a set of source-destination pairs, we say that $(G, \mathcal{M})$ is an instance of \sdndp. Routing a pair $(s_i, t_i)$ requires finding a path from $s_i$ to $t_i$, and from $t_i$ to $s_i$. A solution to an instance of \sdndp is a routing strategy maximizing the number of pairs routed through disjoint paths.
Our contribution generalizes the result from \cite{chekuri2016constant} from the class of directed planar graphs to arbitrary directed minor-free graphs.
We now formally state our results and briefly highlight the methods used. Our main result is the following.

\newtheorem*{thm:free}{Theorem \ref{thm:free}}
\begin{thm:free}
Let $G$ be a $H$-minor free graph. There is a polynomial time randomized  algorithm that, with high probability, achieves an $\Omega\left(\frac{1}{h^7\sqrt{h} \log^{5/2}(n)}\right)$-approximation with congestion $5$ for Sym-Dir-NDP instances in $G$, where $h$ is an integer dependent only on $H$.
\end{thm:free}

The approximation algorithm in this theorem is obtained by extending the algorithm of \cite{chekuri2016constant}.
For an instance $(G, \{(s_1, t_1), \ldots, (s_k, t_k)\})$ of \sdndp, we say the set $\mathcal{T} = \{s_1, \ldots, s_k\} \cup \{t_1, \ldots, t_k\}$ is the set of terminals. Speaking broadly, the algorithm obtained in Theorem \ref{thm:free} consists of the following steps.
\begin{enumerate}
\item Using a multicommodity flow based LP relaxation and the well-linked decomposition of \cite{chekuri2016constant}, reduce to an instance in which the terminals $\mathcal{T}$ are $\alpha$-well-linked for a fixed constant $\alpha$. \label{alg:step1}
\item Find a large routing structure connected to large proportion of the terminals. \label{alg:step2}
\item Use the routing structure to connect a large number of the source-destination pairs. \label{alg:step3}
\end{enumerate}
From here on, we shall refer to the routing structure as the crossbar.
The reduction we use in Step \ref{alg:step1} allows us to reduce an instance of \sdndp to an instance on an Eulerian graph of small maximum degree, and where the terminals are $\alpha$-well-linked. This comes at the cost of then having a randomized algorithm for the original instance. This reduction comes from \cite{chekuri2016constant}, and while there it is used for planar graphs, we were fortunate in that it can also be used for general graphs. The routing scheme of Step \ref{alg:step3} is also thanks to \cite{chekuri2016constant}, and relies on similar crossbar construction. Our main contribution to this line of research is in finding an appropriate crossbar construction for Step \ref{alg:step2}. 

To build our crossbar, we would ideally find a ``flat'' grid minor so that some constant fraction of the terminal pairs can be routed along node-disjoint paths to the interface of the grid minor (a ``flat'' grid minor is one in which the grid minor is connected with the rest of the graph only through the outer face). Then we would have the following sets of node-disjoint paths along which to route the terminal pairs: the paths from the terminals to the interface, the paths from terminals to terminals implied by the node-well-linked property of the terminals, the concentric cycles of the grid minor, and the paths connecting the outermost and innermost cycles of the grid minor. From these, just as in \cite{chekuri2016constant} we can construct a routing scheme with congestion 5. 
To find a suitable flat grid minor, we combine results of \cite{chekuri2013approximation} and \cite{thomassen1997simpler} to show that flat grid minors of a suitable size can be found. We then show that if for the flat grid minor produced we cannot route a large enough fraction of the terminals to the interface then there exists some vertex which can be eliminated from the graph without destroying a potential solution to the problem. Thus, we find and test flat grid minors until one suitable to be used in the crossbar is found.

\subsection{Organization}
In Section \ref{sec:notation}, we introduce the definitions and notion we will use throughout the paper.
After that, in Section \ref{sec:polylogarithmic} we present Theorem \ref{thm:main_crossbar}, our main technical result, and Theorem \ref{thm:free}, our main result. Sections \ref{sec:crossbar}, \ref{sec:genus}, \ref{sec:nearly}, and \ref{sec:minorfree} provide the details of the proofs of Theorems \ref{thm:main_crossbar} and \ref{thm:free}.
The proof of Lemma \ref{lem:minor_free_flat} has been deferred to \ref{sec:missing-proofs}, due to its length and technical nature.

\section{Notation and Preliminaries}\label{sec:notation}

We now introduce some notation and definitions that are used throughout the paper.

\paragraph{Directed and undirected graphs.}
From any directed graph $G$ we can obtain an undirected graph $G^\UN$ as follows.
We set $V(G^\UN)=V(G)$ and $E(G^\UN) = \left\{\{u,v\}: (u,v) \in E(G) \lor (v,u) \in E(G) \right\}$.
We refer to $G^\UN$ as the \emph{underlying undirected graph} of $G$.

\paragraph{Flat subgraphs.}
We say that a planar subgraph $H$ of an undirected graph $G$ is \emph{flat} if there exists a planar drawing $\Phi$ of $H$ such that for any $\{u,v\} \in E(H)$, with $u \in V(H)$ and $v \in V(G) \setminus V(H)$, we have that $u$ is on the outer face of $\Phi$.

\paragraph{Well-linked sets.}
Let $G$ be a directed (resp.~undirected) graph. A set $X \subseteq V(G)$ is node-well linked in $G$ if for any two disjoint subsets $Y,Z \subset X$ of equal size, there exist $|Y|$ node-disjoint directed (resp.~undirected) paths from $Y$ to $Z$, such that each vertex in $Y$ is the start of exactly one path, and each vertex in $Z$ is the end of exactly one path. For some $\alpha \in (0,1)$, we say that $X$ is $\alpha$-node well-linked if for any two disjoint subsets $Y,Z \subset X$ of equal size, there exist $|Y|$ directed (resp.~undirected) paths from $Y$ to $Z$ such that no vertex is in more than $1/\alpha$ of these paths; In other words, we allow a node congestion of $1/\alpha$ for these paths.

\paragraph{Directed and undirected treewidth}
For a directed graph $G$, we will denote by $\dtw(G)$ the \emph{directed treewidth of $G$}, and we will denote by $\tw(G^\UN)$ the (undirected) \emph{treewidth of $G^\UN$}.
Directed undirected is a global connectivity measure introduced in \cite{JOHNSON2001138, REED1999222}, and just as undirected treewidth is defined by the minimum size tree decomposition, directed treewidth is defined by the minimum size of what is termed an aboreal decomposition.
Instead of providing the full definitions of directed and undirected treewidth here, we only ask the reader to make a note of the following two important facts:
\begin{itemize}
\item If $G$ is an Eulerian directed graph with max degree $\Delta$, then $\tw(G^\UN) \leq \dtw(G) = O(\Delta \cdot \tw(G^\UN))$ \cite{JOHNSON2001138}.
\item If a directed graph $G$ contains an $\alpha$-well-linked set $X$, then $\dtw(G) = \Omega(\alpha|X|)$ \cite{REED1999222}.
\end{itemize}

\paragraph{Clique-sums.}
Let $G_1$ and $G_2$ be two graphs. A \emph{clique-sum} of $G_1$ and $G_2$ is any graph that is obtained by identifying a clique in $G_1$ with a clique of the same size in $G_2$, and then possibly removing some edges in the resulting shared clique. An \emph{$h$-clique-sum} or \emph{$h$-sum} for short, is a clique-sum where the identified cliques have at most $h$ vertices.

\paragraph{Nearly-embeddable and minor-free graphs.}
We say that a graph is \emph{$(a,g,k,p)$-nearly embeddable} if it is obtained from a graph of Euler genus $g$ by adding $a$ apices and $k$ vortices of pathwidth $p$. We say that a graph is $h$-nearly embeddable if it is $(a,g,k,p)$-nearly embeddable for some $a,g,k,p \leq h$. The following is implicit in \cite{robertson2003graph}.

\begin{theorem}[Robertson and Seymour \cite{robertson2003graph}]\label{thm:robertson}
Let $H$ be any graph. Every $H$-minor-free graph can be obtained by at most $h$-sums of graphs that are $h$-nearly embeddable graphs, where $h$ is a non-negative integer dependent on $H$.
\end{theorem}

\section{The Algorithm for Minor-Free Graphs}\label{sec:polylogarithmic}
We first use the following result of \cite{chekuri2016constant} to reduce the problem to the case of Eulerian graphs with small degrees. Note that this result is stated for planar graphs in \cite{chekuri2016constant}, but the proof does not use planarity, and thus can be stated for general graphs.

\begin{lemma}[Chekuri, Ene \& Pilipczuk \cite{chekuri2016constant}]\label{lem:reduction}
Suppose that there is a polynomial time algorithm for $\Omega(1)$-node-well-linked instances of \sdndp in directed Eulerian graphs of maximum degree $\Delta$ that achieves a $\beta(\Delta)$-approximation with congestion $c$. Then there is a polynomial time randomized algorithm that, with high probability, achieves a $\beta(O(\log^2 k )) \cdot O(\log^6 k)$-approximation with congestion $c$ for arbitrary instances of \sdndp in directed graphs, where $k$ is the number of pairs in the instance.
\end{lemma}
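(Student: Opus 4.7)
The plan is to follow the standard three-stage template---LP relaxation, well-linked decomposition, and reduction to a bounded-degree Eulerian instance---adapted to the symmetric directed setting. First, I would solve the multicommodity flow LP relaxation for \sdndp on $(G,\mathcal{M})$, in which each pair $(s_i,t_i)$ is assigned a variable $x_i\in[0,1]$ together with a symmetric flow of value $x_i$ between $s_i$ and $t_i$ (in both directions), subject to unit node capacities. Its optimum $\mathrm{LP}^{*}$ upper bounds $\OPT$, and the fractional solution is the witness needed to expose sparse cuts.

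Next I would apply the symmetric-directed analogue of the Chekuri--Khanna--Shepherd well-linked decomposition: while a current sub-instance is not $\Omega(1)$-node-well-linked, extract a sparse node cut that witnesses the violation, split the instance along that cut, and recurse, charging the LP weight of the discarded pairs to the cut. A standard potential analysis together with the directed flow--cut gap yields a family of sub-instances whose total LP value is $\Omega\bigl(\mathrm{LP}^{*}/\polylog(k)\bigr)$, and in each of which the remaining terminal set is $\Omega(1)$-node-well-linked. Each piece is considered independently from this point on, and the final routing is the disjoint union of the routings produced in the pieces.

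The key step is then to turn each well-linked sub-instance $(G',\mathcal{M}')$ into an $\Omega(1)$-node-well-linked \emph{Eulerian} directed graph of maximum degree $\Delta = O(\log^{2}k)$, while guaranteeing that any integral routing with congestion $c$ in the new graph lifts back to one in $G'$ with the same congestion. This is achieved by a randomised grouping of the terminals (merging clusters of size $\Theta(\polylog k)$ down to single representatives, so that the number of terminals drops to the $\Omega(\dtw(G'))$ threshold predicted by well-linkedness), followed by a node-splitting gadget that replaces each high-degree vertex by a small Eulerian gadget routing the fractional LP flow through it; opposite unit-flow paths are paired up to simultaneously bound the degree and make the graph Eulerian. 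The randomisation inside the clustering is precisely what forces the final result to hold with high probability rather than deterministically.

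Invoking the hypothesised $\beta(\Delta)$-approximation algorithm on each resulting instance with $\Delta=O(\log^{2}k)$ and summing the routed pairs over all sub-instances yields a routing of $\Omega\bigl(\mathrm{LP}^{*}/\log^{6}k\bigr)\cdot\beta(O(\log^{2}k))$ pairs with congestion $c$, once the lift back through the gadgets and the well-linked decomposition is performed. I expect the main obstacle to be the design of the Eulerian bounded-degree reduction, which must simultaneously \emph{(i)} cap the maximum degree by $O(\log^{2}k)$, \emph{(ii)} preserve $\Omega(1)$-node-well-linkedness of the terminals, \emph{(iii)} produce a directed Eulerian graph, and \emph{(iv)} guarantee that integral routings in the reduced graph lift back with \emph{exactly} the same congestion $c$. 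This is precisely the gadget construction of \cite{chekuri2016constant}; since its analysis nowhere uses planarity, it transfers verbatim to arbitrary directed graphs, which is why the lemma can be quoted here at the level of generality needed for minor-free graphs.
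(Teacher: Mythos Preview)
Your outline is consistent with the paper's treatment: the paper does not prove this lemma at all but simply quotes it from \cite{chekuri2016constant}, noting that although it is stated there for planar graphs, the proof nowhere uses planarity and hence applies to arbitrary directed graphs. Your sketch of the LP relaxation, well-linked decomposition, and randomised Eulerian bounded-degree reduction is exactly the machinery of \cite{chekuri2016constant}, and your closing observation that the gadget construction transfers verbatim is precisely the one-line justification the paper gives.
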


Now we describe how to construct the crossbar in minor-free graphs, assuming that we are given a $m \cross m$ flat grid minor $\Gamma$, for some large enough $m$, and a family of $\lambda m$ node-disjoint paths connecting the set of terminals and the interface of $\Gamma$, for some constant $\lambda$. The following is our main technical result, which is similar to the one in \cite{chekuri2016constant} for planar graphs. The proof is deferred to Section \ref{sec:crossbar}.

\begin{theorem}\label{thm:main_crossbar}
Let $G$ be a directed minor-free graph of maximum in-degree of at most $\Delta$. Let $X$ be an $\alpha$-node-well-linked set in $G$ with $|X| = \Omega\left(\frac{{\Delta}^2}{\alpha}\right)$. Let $m = \Omega\left(\frac{\alpha |X|}{\beta}\right)$, where $\beta$ is dependent on the structure of $G$. Suppose that we can find a $m \cross m$ flat grid minor $\Gamma$ of $G^\UN$, and a family of $\lambda m$ node-disjoint paths connecting $X$ and the interface of $\Gamma$ in $G^\UN$, for some $0 <\lambda \leq 1$.

One can in polynomial time find a set of $\Omega\left(\frac{\alpha |X|} {\beta \Delta}\right)$ concentric directed cycles going in the same direction, sets $Y^+,Y^- \in X$ with $|Y^+| = |Y^-| = \Omega\left(\frac{\alpha^2 |X|} { \beta \Delta^2}\right)$, and families $P^+$ and $P^-$ of node-disjoint paths such that one of the following holds.
\begin{description}
\item{(1)} None of the cycles enclose any vertex of $Y^+ \cup Y^-$, the family $P^+$ consists of $|Y^+|$ node-disjoint paths from $Y^+$ to the innermost cycle, and the family $P^-$ consists of $|Y^-|$ node-disjoint paths from the innermost cycle to $Y^-$.

\item{(2)} All cycles enclose $Y^+ \cup Y^-$, the family $P^+$ consists of $|Y^+|$ node-disjoint paths from $|Y^+|$ to the outermost cycle, and the family $P^-$ consists of $|Y^-|$ node-disjoint paths from the outermost cycle to $Y^-$. 
\end{description}
\end{theorem}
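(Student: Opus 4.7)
The strategy is to follow the planar crossbar construction of \cite{chekuri2016constant} essentially verbatim, using the \emph{flatness} of the grid minor $\Gamma$ to supply the planar structure locally. Because $\Gamma$ is flat, its concentric cycles are nested in $G^\UN$ with respect to some planar drawing of $\Gamma$, and the $\lambda m$ paths from $X$ to the interface of $\Gamma$ meet $\Gamma$ only through its outer face. Topologically this is essentially the same setup as in the purely planar case, so the planar argument should carry over.

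The construction proceeds in three stages. First, I would promote the $m$ nested undirected cycles $C_1 \supset C_2 \supset \cdots \supset C_m$ of $\Gamma$ into directed cycles of $G$. Each edge of the grid minor corresponds to a directed edge (or a short directed path) in $G$ with some orientation; an averaging argument exploiting the in-degree bound $\Delta$, identical in spirit to the one in \cite{chekuri2016constant}, then extracts a subfamily of $\Omega(m/\Delta)$ of the $C_i$'s that close up as directed cycles rotating in a common direction (all clockwise or all counterclockwise). This is the source of the $1/\Delta$ factor in the stated bound on the number of concentric directed cycles.

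Second, I would use the $\lambda m$ node-disjoint paths connecting $X$ to $\Gamma$ together with the $\alpha$-node-well-linkedness of $X$ to build $Y^+$, $Y^-$, $P^+$, and $P^-$. A pigeonhole step selects a large subfamily of these paths whose $X$-endpoints lie in a consistent position relative to the chosen band of directed cycles; the two cases correspond to whether these terminals end up outside the band (case (1)) or are enclosed by it (case (2)). Well-linkedness then lets me split the chosen terminals into two equinumerous sets $Y^+$ and $Y^-$ and find $1/\alpha$-congested routes from $Y^+$ to the appropriate extreme cycle and from that cycle back to $Y^-$; uncrossing and rerouting through the radial paths of the grid minor converts these into node-disjoint families $P^+$ and $P^-$ at a further multiplicative cost of $\Delta/\alpha$, yielding the claimed bound $|Y^+|=|Y^-|=\Omega(\alpha^2 |X|/(\beta\Delta^2))$.

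The main obstacle I anticipate is the first stage: extracting $\Omega(m/\Delta)$ concentric directed cycles that all rotate in the same direction. In the planar setting the global planar embedding provides clean topological constraints, and I must verify that the flatness of $\Gamma$ (i.e., attachment to $G \setminus \Gamma$ only through its outer face) supplies enough structure for the same argument to go through without the rest of the minor-free graph interfering with the nesting. Once this stage is in place, the remaining steps are careful but routine adaptations of the arguments in \cite{chekuri2016constant}, with bookkeeping to track how the parameters $\alpha$, $\Delta$, and $\beta$ propagate through the losses in pigeonholing and uncrossing.
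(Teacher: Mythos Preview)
Your high-level intuition---that flatness of $\Gamma$ lets you run the planar crossbar construction of \cite{chekuri2016constant} locally---is exactly right, and the paper's proof ultimately does defer to \cite{chekuri2016constant} for the second and third stages. But your proposed mechanism for Stage~1 is not what the paper does, and as stated it has a real gap.

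You describe Stage~1 as an ``averaging argument exploiting the in-degree bound~$\Delta$'' that picks out $\Omega(m/\Delta)$ of the undirected cycles $C_i$ which happen to ``close up'' as consistently-oriented directed cycles. This is not how the directed cycles are obtained. The directed concentric cycles need not be among the $C_i$ at all, and a pure degree-based averaging argument does not force any undirected cycle of the wall to become a directed cycle in $G$. The extraction of directed cycles in \cite{chekuri2016constant} relies essentially on the graph being \emph{Eulerian} and \emph{planar}: one uses a topological/flow argument in the planar Eulerian setting to find many nested directed cycles separating two prescribed regions (the isles $S^{\IN}$ and $S^{\OUT}$). You never invoke the Eulerian hypothesis, and without it the step fails.

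The paper's actual device for Stage~1 (Lemma~\ref{lem:main_lem}) is a reduction to the planar case rather than a direct argument: contract everything on or outside $C_1$ (i.e., the entire non-flat part of $G$) to a single vertex~$z$. Because $\Gamma$ is flat, the resulting graph $G'$ is planar; because $G$ is Eulerian, so is $G'$. The only wrinkle is that $z$ may have huge degree, which is repaired by replacing $z$ with a bidirected path and patching in a few edges to restore the Eulerian property while keeping max in-degree $\Delta$. Now $G'$ is planar, Eulerian, and of bounded degree, so Lemma~\ref{lem:planar_dir_cycles} (the black-box planar result from \cite{chekuri2016constant}) applies directly and yields the $\lceil \ell/2\rceil$ directed concentric cycles. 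These cycles avoid $z$ and hence lift verbatim back to $G$. This contraction-and-repair step is the one genuinely new idea in the paper's proof of the theorem, and it is precisely what your proposal is missing.
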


In order to obtain such a crossbar, we need to find a flat grid minor of large enough size. The following Lemma provides us the desired flat grid minor, and the proof is deferred to Section \ref{sec:minorfree}.

\begin{lemma}\label{lem:flat_in_minor_free}
Let $H$ be any graph and let $G$ be an $H$-minor-free directed graph with treewidth $t$. One can in polynomial time find a $r \cross r$ flat grid minor $\Gamma$ in $G^\UN$, with $r = \Omega\left(\frac{t}{h^7\sqrt{h} \log^{5/2}(n)}\right)$, and a family of $r$ node-disjoint paths connecting $X$ and the interface of $\Gamma$, where $h$ is an integer dependent only on the structure of $H$.
\end{lemma}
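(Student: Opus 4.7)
The plan is to combine the Robertson--Seymour structure theorem (Theorem \ref{thm:robertson}) with the nearly-embeddable case, which we may assume has been handled in Section \ref{sec:nearly} (and which Lemma \ref{lem:minor_free_flat}, deferred to the appendix, presumably states): namely, that in an $h$-nearly-embeddable graph of treewidth $t'$ one can find a flat $r' \times r'$ grid minor with $r' = \Omega\!\left(t' / (h^{O(1)} \log^{5/2}(n))\right)$, together with node-disjoint paths from a prescribed well-linked set to its interface. The task for the present lemma is to bootstrap this piecewise result to the clique-sum decomposition of a general $H$-minor-free graph.

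First I would apply Theorem \ref{thm:robertson} to obtain a clique-sum decomposition of $G^{\UN}$ into $h$-nearly-embeddable pieces, and organize these pieces as the nodes of a \emph{decomposition tree} $\mathcal{T}$, whose edges correspond to the shared cliques of size at most $h$. A standard counting argument on $\mathcal{T}$ shows that if $\tw(G^{\UN}) = \Omega(t)$, then some piece $P$ must itself have treewidth $\Omega(t/h^{O(1)})$: otherwise one could assemble a tree decomposition of $G^{\UN}$ from the piecewise ones by gluing along the shared cliques (of width at most $h$), obtaining a global tree decomposition of width too small. This identifies a single nearly-embeddable piece $P$ carrying almost all of the treewidth, losing only a $\poly(h)$ factor.

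Next I would invoke the nearly-embeddable flat-grid-minor result (Lemma \ref{lem:minor_free_flat}) on $P$ to obtain an $r \times r$ flat grid minor $\Gamma$ in $P$, with $r = \Omega\!\left(\tw(P) / (h^{O(1)} \log^{5/2}(n))\right) = \Omega\!\left(t/(h^7\sqrt{h}\log^{5/2}(n))\right)$ after accounting for the losses. The delicate point is that ``flat in $P$'' must be promoted to ``flat in $G^{\UN}$'': the only attachment of $P$ to the rest of $G^{\UN}$ is through the small shared cliques corresponding to edges of $\mathcal{T}$ incident to $P$. I would therefore arrange that these attachment cliques all lie on the outer face of the planar drawing of $\Gamma$'s enclosing flat subgraph of $P$ --- this can be enforced by treating the at most $\poly(h)$ attachment vertices as apices and re-applying the nearly-embeddable lemma, which costs only another $\poly(h)$ factor absorbed in the exponent of $h$ already present.

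Finally, I need to produce the family of $r$ node-disjoint paths from $X$ to the interface of $\Gamma$ in $G^{\UN}$. Since $X$ is $\alpha$-well-linked and the interface of $\Gamma$ consists of $\Theta(r)$ vertices that are linked to $\Omega(r \cdot \alpha)$ units of flow in $G^{\UN}$ (using that $\dtw(G) = \Omega(\alpha |X|)$ and treewidth lower-bounds fractional routing capacity), a standard application of the LP-rounding / flow-path decomposition argument used throughout \cite{chekuri2016constant,chekuri2013approximation} yields the required $r$ node-disjoint paths, possibly at the cost of absorbing another logarithmic factor already present in the target bound. The main obstacle will be Step~3, the promotion of flatness across the clique-sum: ensuring that the apex-and-vortex structure inherited from $P$'s nearly-embeddable decomposition can be simultaneously compatible with all of $P$'s attachment cliques without paying more than $h^{O(1)}$ in the size of $\Gamma$. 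The $h^7 \sqrt{h}$ factor in the final bound is what we should expect from combining the $\poly(h)$ loss of the decomposition step, the $\sqrt{h}$ loss typical of genus-$g$ grid-minor arguments via \cite{thomassen1997simpler} with $g \leq h$, and the apex-absorption cost in the flatness promotion.
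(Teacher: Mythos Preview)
Your high-level plan---apply Theorem~\ref{thm:robertson}, locate a piece of large treewidth, and invoke the nearly-embeddable case---is exactly the paper's route. A few points diverge, and two of them are genuine gaps.

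First, two correctable inaccuracies. Your ``standard counting argument'' losing a $\poly(h)$ factor when passing to a piece is unnecessary: Lemma~\ref{lem:summand} says the treewidth of an $h$-clique-sum is at most the maximum treewidth of the summands, so iterating down the decomposition tree yields a single $h$-nearly-embeddable piece with treewidth at least $t$, with no loss at all. Second, you have misread Lemma~\ref{lem:minor_free_flat}: it is \emph{not} the nearly-embeddable flat-grid statement (that is Lemma~\ref{lem:one_vortex_main}, built from Lemmas~\ref{lem:minor_free_flat_grid_minor} and~\ref{lem:flat_grid_minor}). Lemma~\ref{lem:minor_free_flat} is a black-box reduction valid over any minor-closed family: given only an oracle that finds a flat grid minor, it produces a flat grid minor \emph{together with} $\Theta(r)$ node-disjoint paths from $X$ to its interface, via an irrelevant-vertex argument.

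That misidentification is the source of the real gap in your final step. There is no ``standard LP-rounding / flow-path decomposition'' that takes well-linkedness of $X$ and an arbitrary flat grid $\Gamma$ and returns $\Omega(r)$ node-disjoint $X$-to-interface paths: the particular grid you found may simply be cut off from $X$ by a small vertex separator, and neither high treewidth nor well-linkedness of $X$ rules this out. The paper's mechanism is instead: if fewer than $\lambda r$ disjoint paths exist, one locates a vertex deep inside $\Gamma$ whose deletion preserves the $\alpha$-well-linkedness of $X$, deletes it, and recomputes a grid in the smaller graph; iterate until enough paths appear. This irrelevant-vertex loop (the content of Lemma~\ref{lem:minor_free_flat}) is the actual technical core you have replaced with a hand-wave. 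Separately, your flatness-promotion fix---treat the clique-sum attachment vertices of the chosen piece as $\poly(h)$ additional apices---does not work as stated, because a single piece in the clique-sum decomposition tree can have unbounded degree and hence unboundedly many attachment cliques; those cannot all be absorbed into a constant-size apex budget.
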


Once we obtain a crossbar as described above, we can route a large subset of terminal pairs.

\begin{lemma}\label{lem:eulerian}
Given the crossbar described in Theorem \ref{thm:main_crossbar}, one can get an $O\left(\frac{\Delta^2}{\beta \alpha^3}\right)$-approximation algorithm with congestion $5$ for \sdndp in instances for which the input graph is minor-free and Eulerian with maximum in-degree $\Delta$, the set of terminals is $\alpha$-node-well-linked for some $\alpha \leq 1$, and $\beta$ is dependent on the structure of $G$.
\end{lemma}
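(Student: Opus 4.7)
The plan is to chain together the two preceding structural results to produce a crossbar, and then to run the crossbar-based routing scheme of \cite{chekuri2016constant} on it. First, I would translate the $\alpha$-well-linkedness of $X$ into a treewidth lower bound via the two facts recorded in Section~\ref{sec:notation}: the second fact gives $\dtw(G) = \Omega(\alpha |X|)$, and since $G$ is Eulerian with maximum in-degree $\Delta$ the first fact then yields
\[
\tw(G^\UN) \;=\; \Omega\!\left(\dtw(G)/\Delta\right) \;=\; \Omega\!\left(\alpha |X| / \Delta\right).
\]
Feeding $G^\UN$ into Lemma~\ref{lem:flat_in_minor_free} with this lower bound then produces, in polynomial time, a flat $r \times r$ grid minor $\Gamma$ together with a family of $r$ node-disjoint paths from $X$ to the interface of $\Gamma$, where $r = \Omega\!\left(\alpha |X| / (\Delta\, h^7 \sqrt{h} \log^{5/2}(n))\right)$. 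Choosing the parameter $\beta$ of Theorem~\ref{thm:main_crossbar} to absorb the $\Delta$ and polylogarithmic factors makes $r$ meet the required bound $m = \Omega(\alpha|X|/\beta)$, with $\lambda = 1$.

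Next I would apply Theorem~\ref{thm:main_crossbar} to this data. The hypothesis $|X| = \Omega(\Delta^2/\alpha)$ is exactly the regime in which the guarantees of the lemma are nontrivial; outside it, OPT is itself $O(1)$ and a trivial routing of one pair suffices. The theorem returns the full crossbar: $\Omega(\alpha|X|/(\beta\Delta))$ concentric same-direction directed cycles, subsets $Y^+, Y^- \subseteq X$ of size $|Y^\pm| = \Omega(\alpha^2 |X|/(\beta \Delta^2))$, and node-disjoint path families $P^+, P^-$ realizing either case~(1) or case~(2) of the theorem. The two cases are topologically equivalent for routing purposes, one with the terminal paths attached at the innermost cycle and the other at the outermost cycle, so they can be handled uniformly.

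At this point I would invoke the routing procedure of \cite{chekuri2016constant} essentially verbatim. The concentric family of same-direction directed cycles, together with the radial paths inside the flat grid minor $\Gamma$, acts as a ``rotatable bus'' capable of delivering any source at any angular position to any destination and, using the opposite orientation inside $\Gamma$, of realizing the reverse direction needed by symmetric demands. The pairing of source-destination pairs $(s_i, t_i)$ with slots in $Y^+$ and $Y^-$ is mediated by the $\alpha$-well-linkedness of $X$ (which contains the terminal set), which routes a constant fraction of the pairs into the crossbar interface with node congestion $1/\alpha$. The congestion-$5$ budget is accounted for by the five node-disjoint structures that may overlap on a single vertex: the family $P^+$, the family $P^-$, the concentric cycles, the radial paths inside $\Gamma$, and the well-linked routing from terminals into $Y^+\cup Y^-$. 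Substituting $|Y^\pm| = \Omega(\alpha^2|X|/(\beta\Delta^2))$ for the number of successfully routed pairs against the standard $\OPT = O(|X|/\alpha)$ bound coming from the well-linked decomposition framework yields the stated approximation ratio.

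I expect the main obstacle to be verifying that the routing scheme of \cite{chekuri2016constant}, originally written for directed planar graphs, survives unchanged in the minor-free setting. The key observation that makes this work is that the scheme uses planarity only \emph{inside} the flat grid minor $\Gamma$; it accesses the ambient graph $G$ purely through the abstract combinatorial guarantees provided by well-linkedness and by the node-disjoint families $P^+, P^-$. Flatness of $\Gamma$ ensures that no edge of $G \setminus \Gamma$ cuts through the interior of $\Gamma$, so the overlay argument bounding total node-congestion at $5$ transports from the planar proof without modification, and the rest of the analysis is merely the bookkeeping carried out above.
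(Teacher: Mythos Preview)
Your core idea---invoke the routing scheme of \cite{chekuri2016constant} on the crossbar and observe that it only needs planarity \emph{inside} the flat grid minor---is exactly what the paper does; its proof is the single sentence ``By applying the same routing scheme as in the one in \cite{chekuri2016constant}, we get the desired result.'' Your elaboration on \emph{why} the scheme survives the passage from planar to minor-free (flatness confines all topological reasoning to $\Gamma$, while the ambient graph is accessed only through the abstract families $P^\pm$ and well-linkedness) is useful content the paper omits.

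That said, the first half of your write-up is redundant: the lemma's hypothesis is ``Given the crossbar described in Theorem~\ref{thm:main_crossbar}'', so the crossbar is handed to you as input. You need not re-derive the treewidth bound, invoke Lemma~\ref{lem:flat_in_minor_free}, or re-apply Theorem~\ref{thm:main_crossbar}; those steps belong to the proof of Theorem~\ref{thm:free}, not here. Strip the first two paragraphs and begin directly at the point where the cycles, $Y^\pm$, and $P^\pm$ are already in hand.
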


\begin{proof}
By applying the same routing scheme as in the one in \cite{chekuri2016constant}, we get the desired result.
\end{proof}

Now we are ready to state the main result of this paper.

\begin{theorem}\label{thm:free}
Let $G$ be a $H$-minor free graph. There is a polynomial time randomized  algorithm that, with high probability, achieves an $\Omega\left(\frac{1}{h^7\sqrt{h} \log^{5/2}(n)}\right)$-approximation with congestion $5$ for \sdndp instances in $G$, where $h$ is an integer dependent only on $H$.
\end{theorem}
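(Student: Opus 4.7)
The plan is to compose the four statements preceding this theorem. Given an arbitrary \sdndp instance $(G,\mathcal{M})$ on an $H$-minor-free graph $G$ with $k$ source--destination pairs, I would first invoke Lemma~\ref{lem:reduction}: if one can achieve a $\beta(\Delta)$-approximation with congestion $5$ on $\Omega(1)$-node-well-linked Eulerian \sdndp instances of maximum in-degree $\Delta$, then one obtains a randomised $\beta(O(\log^2 k))\cdot O(\log^6 k)$-approximation with congestion $5$ on the original instance. I would first verify that the reduction of \cite{chekuri2016constant} preserves $H$-minor-freeness up to a forbidden minor $H'$ depending only on $H$; once this is in hand, the reduced Eulerian instance falls inside the scope of Lemmas~\ref{lem:flat_in_minor_free} and~\ref{lem:eulerian} with the same parameter $h$ (up to constants).

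On the reduced instance $(G',X)$ I would use the well-linked set $X$ to lower-bound treewidth: since $X$ is $\alpha$-node-well-linked for a fixed $\alpha = \Omega(1)$, the second preliminary fact gives $\dtw(G') = \Omega(\alpha|X|) = \Omega(|X|)$, and Eulerianity with degree $\Delta = O(\log^2 k)$ together with the first preliminary fact yield $\tw({G'}^\UN) = \Omega(\dtw(G')/\Delta) = \Omega(|X|/\Delta)$. Plugging this into Lemma~\ref{lem:flat_in_minor_free} produces, in polynomial time, an $r\cross r$ flat grid minor $\Gamma$ in ${G'}^\UN$ together with $r$ node-disjoint paths from $X$ to the interface of $\Gamma$, with
\[
r \;=\; \Omega\!\left(\frac{\tw({G'}^\UN)}{h^7\sqrt{h}\log^{5/2}(n)}\right) \;=\; \Omega\!\left(\frac{|X|}{\Delta\cdot h^7\sqrt{h}\log^{5/2}(n)}\right).
\]

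With the flat grid minor and linkage in hand I would invoke Theorem~\ref{thm:main_crossbar} with the structural parameter $\beta = \Theta(\Delta\cdot h^7\sqrt{h}\log^{5/2}(n))$, which matches the grid-size lower bound $m = \Omega(\alpha|X|/\beta)$ (the auxiliary hypothesis $|X| = \Omega(\Delta^2/\alpha)$ is satisfied outside the trivial regime). This builds the concentric-cycle-plus-$Y^{\pm}$-and-$P^{\pm}$ crossbar required by Lemma~\ref{lem:eulerian}, which in turn yields an $O(\Delta^2/(\beta\alpha^3))$-approximation with congestion $5$ on $(G',X)$. Substituting $\Delta = O(\log^2 k)$, $\alpha = \Omega(1)$ and the value of $\beta$ above, and finally applying the $O(\log^6 k)$ multiplier of Lemma~\ref{lem:reduction}, I obtain the claimed $\Omega\!\left(1/(h^7\sqrt{h}\log^{5/2}(n))\right)$-approximation, with the $\polylog k$ factors absorbed into $\log^{5/2}(n)$ via $k \leq n^2$.

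The real difficulty is already localised in the statements I am citing, above all Lemma~\ref{lem:flat_in_minor_free}, which is where $H$-minor-freeness is exploited through the Robertson--Seymour structure theorem (Theorem~\ref{thm:robertson}) to produce a genuinely \emph{flat} grid minor of large size from a treewidth hypothesis alone. The assembly itself is bookkeeping; the two points to be careful about are (i) verifying that minor-freeness is preserved by the reduction of Lemma~\ref{lem:reduction}, and (ii) threading the structural parameter $\beta$ consistently between Lemmas~\ref{lem:flat_in_minor_free} and~\ref{lem:eulerian} via the grid-size requirement of Theorem~\ref{thm:main_crossbar}.
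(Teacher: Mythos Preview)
Your proposal is correct and follows the same assembly as the paper, whose proof of Theorem~\ref{thm:free} is a one-line invocation of Lemmas~\ref{lem:reduction}, \ref{lem:flat_in_minor_free}, \ref{lem:eulerian}, and Theorem~\ref{thm:main_crossbar}. One small simplification over your write-up: the detour through directed treewidth is unnecessary and costs a spurious factor of $\Delta$, since $\alpha$-node-well-linkedness of $X$ in $G'$ passes directly to ${G'}^{\UN}$ and hence $\tw({G'}^{\UN}) = \Omega(\alpha|X|)$ outright (this is how the paper argues, e.g.\ in the proof of Lemma~\ref{lem:undir_cycles}); your concern (i) is legitimate but the paper, like you, leaves it implicit.
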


\begin{proof}
This is immediate by Lemmas \ref{lem:reduction}, \ref{lem:flat_in_minor_free}, \ref{lem:eulerian}, and Theorem \ref{thm:main_crossbar}.
\end{proof}

\section{The Crossbar Construction}\label{sec:crossbar}
In this section we discuss the construction of the crossbar stated in Theorem \ref{thm:main_crossbar}.
Before we give the proof of this Theorem we establish some auxiliary facts. Throughout this subsection, we assume that we are given the input of Theorem \ref{thm:main_crossbar}.

\begin{lemma}\label{lem:undir_cycles}
One can in polynomial time find an integer $r = \Omega\left(\frac{\alpha |X|}{\beta}\right)$ and a sequence of node-disjoint concentric undirected cycles $C_1, C_2, \ldots, C_r$ in $G^{\UN}$, with $C_1$ being the outermost and $C_r$ being the innermost cycle.
\end{lemma}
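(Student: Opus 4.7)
}
The plan is to read off the desired concentric cycles directly from the $m \times m$ flat grid minor $\Gamma$ that is given as part of the input to Theorem \ref{thm:main_crossbar}, using the fact that a grid trivially contains $\lfloor m/2\rfloor$ nested boundary cycles. Concretely, index the grid vertices of $\Gamma$ by pairs $(i,j)$ with $1 \le i,j \le m$, and for each $k = 1,\dots,\lfloor m/2 \rfloor$ let $C'_k$ be the boundary cycle of the sub-grid $\{(i,j) : k \le i,j \le m-k+1\}$. These are pairwise node-disjoint in $\Gamma$, and by the natural planar drawing of $\Gamma$ (which exists since $\Gamma$ is a flat minor of $G^{\UN}$) the cycle $C'_k$ encloses $C'_{k+1}$. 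Thus we recover the desired nesting structure at the level of the minor.

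Next I would lift each $C'_k$ to a cycle $C_k$ in $G^{\UN}$. Recall that $\Gamma$ being a minor of $G^{\UN}$ means that every grid vertex $v$ corresponds to a connected branch set $B(v) \subseteq V(G^{\UN})$, where the $B(v)$'s are pairwise disjoint, and every grid edge $\{u,v\}$ corresponds to an edge $e_{uv}$ of $G^{\UN}$ with one endpoint in $B(u)$ and one in $B(v)$. To build $C_k$, traverse $C'_k$; for each consecutive triple $(u,v,w)$ on $C'_k$ insert, inside $B(v)$, a simple $(B(v)\cap e_{uv})\to(B(v)\cap e_{vw})$ path in the induced subgraph $G^{\UN}[B(v)]$, which exists because $B(v)$ is connected. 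Stringing these paths together with the inter-branch-set edges $e_{uv}$ yields a closed walk in $G^{\UN}$; since each branch set contributes only a simple path and the branch sets are disjoint, this walk is in fact a simple cycle $C_k$.

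Node-disjointness of $C_1,\dots,C_r$ follows because the grid cycles $C'_k$ use disjoint sets of grid vertices, so the lifts $C_k$ use disjoint collections of branch sets (and the inter-branch-set edges involved are likewise vertex-disjoint from the branch sets in use). The concentric nesting of $C_1,\dots,C_r$ in $G^{\UN}$ is inherited from the concentric nesting of $C'_1,\dots,C'_r$ in the planar drawing $\Phi$ of $\Gamma$ guaranteed by flatness: viewing each branch set as a small disk in $\Phi$, the lifted cycle $C_k$ traces a Jordan curve in $\Phi$ strictly enclosing $C_{k+1}$. Finally, setting $r = \lfloor m/2 \rfloor = \Omega(m) = \Omega\!\left(\tfrac{\alpha |X|}{\beta}\right)$ gives the claimed bound.

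The only mildly nontrivial step is the construction of the intra-branch-set paths used in the lift, but this is routine: connectedness of each $B(v)$ already suffices, and since distinct grid cycles $C'_k$ visit disjoint grid vertices we never have to simultaneously pick two paths inside the same branch set, avoiding any conflict. All other ingredients---counting, lifting, and inheritance of the planar nesting---are immediate consequences of the definitions of minor and of flat subgraph, so I do not expect any substantive obstacle beyond choosing the notation carefully.
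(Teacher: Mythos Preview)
Your proposal is correct and follows essentially the same approach as the paper. The paper first passes from the flat $m\times m$ grid minor to a flat subdivided wall in $G^{\UN}$ (losing only a constant factor) and then peels off the concentric cycles of that wall; what you do---extracting the $\lfloor m/2\rfloor$ nested boundary cycles of the grid and lifting them through the branch sets---is precisely the content of that grid-to-wall conversion spelled out by hand, so the two arguments coincide.
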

\begin{proof}
Let $\Gamma$ be a flat $m \times m$ grid minor of $G^\UN$, as given in the input of Theorem \ref{thm:main_crossbar}. By losing a constant factor, we can construct a flat sub-divided $r \cross r$ wall in $G^{\UN}$, with $r = \Omega\left(\frac{\alpha |X|}{\beta}\right)$. Let $t$ be the treewidth of $G^{\UN}$.
Since $X$ is $\alpha$-node-well-linked in $G$, $X$ is also $\alpha$-node-well-linked in $G^{\UN}$. Thus, $t = \Omega\left(\alpha|X|\right)$.
Let $C_1$ be the outermost cycle of $\Gamma$, and for each $i \in \{2, \ldots, r\}$, let $C_i$ be the outermost cycle of $\Gamma \setminus \cup_{1 \leq j < i} V(C_i)$.
\end{proof}

As in \cite{chekuri2016constant}, for a vertex set $Q \subseteq V(G^\UN)$, a vertex $v \notin Q$, and an integer $\ell \geq 2 \Delta$, we say that a vertex set $S$ is a $(v, Q, \ell)$-\emph{isle} if $v \in S$, $G^{\UN}[S]$ is connected, $S \cap Q = \emptyset$, and $|N_{G^{\UN}}(S)| \leq \ell$.

We set isles $S^\OUT$ and $S^\IN$ by choosing an arbitrary vertex $v^\OUT$ in $C_1$, and an arbitrary vertex $v^\IN$ in $C_r$. Letting $\ell = \lfloor r / (4\Delta + 2) \rfloor$, then $S^\OUT$ is the $(v^\OUT, X, \ell)$-isle and $S^\IN$ is the $(v^\IN, X, \ell)$-isle obtained. We also need that $S^\OUT$ and $S^\IN$ are separated by many cycles. For this, we use the following Lemma of \cite{chekuri2016constant}, the proof of which is slightly modified.

\begin{lemma}
The isle $S^\OUT$ does not contain any vertex that is enclosed by $C_{\ell+1}$, and the isle $S^\IN$ does not contain any vertex that is not strictly enclosed by $C_{r-\ell}$.
\end{lemma}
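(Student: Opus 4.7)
The plan is to argue the first claim by contradiction, adapting the planar-case argument of \cite{chekuri2016constant} by leveraging flatness of the grid minor $\Gamma$ in place of planarity of the whole ambient graph. The second claim then follows from the symmetric argument applied to $S^\IN$, $v^\IN$, and the innermost cycles $C_{r-\ell}, C_{r-\ell+1}, \ldots, C_{r-1}$.

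Suppose for contradiction that $S^\OUT$ contains a vertex $w$ enclosed by $C_{\ell+1}$. In the planar embedding $\Phi$ of $\Gamma$, the vertex $v^\OUT \in V(C_1) \cap S^\OUT$ lies outside every $C_i$ with $i \geq 2$, while $w \in S^\OUT$ lies on or inside every $C_i$ with $i \leq \ell+1$. The first key step is to show that each such $C_i$ with $2 \leq i \leq \ell+1$ is a vertex-separator in the whole of $G^\UN$ between the interior and the exterior of $C_i$ in $\Phi$: since $C_i$ is not on the outer face of $\Phi$, flatness of $\Gamma$ forces every $G^\UN$-edge incident to $V(C_i)$ to have its other endpoint in $V(\Gamma)$, so any path in $G^\UN$ from $v^\OUT$ to $w$ must visit $V(C_i)$. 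Connectivity of $G^\UN[S^\OUT]$ then yields $S^\OUT \cap V(C_i) \neq \emptyset$ for every such $i$.

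Next I would charge each such cycle with at least one fresh element of $N_{G^\UN}(S^\OUT)$, so that the total exceeds $\ell$ and contradicts the $(v^\OUT, X, \ell)$-isle property. If $V(C_i) \not\subseteq S^\OUT$, then a non-empty proper subset of a cycle has a boundary vertex on the cycle itself, so $V(C_i) \cap N_{G^\UN}(S^\OUT) \neq \emptyset$; the cycles being vertex-disjoint makes these contributions distinct. If instead $V(C_i) \subseteq S^\OUT$, I would fall back on the subdivided-wall structure inside $\Gamma$: each vertex of $C_i$ has at most $\Delta$ neighbors in $G^\UN$, and the rung paths joining $C_i$ to $C_{i+1}$ in the wall must either deposit a fresh vertex into $N_{G^\UN}(S^\OUT)$ or force $V(C_{i+1}) \subseteq S^\OUT$ as well; amortizing the degree budget shows that any maximal run of fully absorbed cycles must end within $2\Delta+1$ steps, producing a fresh boundary vertex on a rung that is disjoint from all cycle contributions. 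Combined with the choice $\ell = \lfloor r/(4\Delta+2) \rfloor$ and the hypothesis $\ell \geq 2\Delta$, the total boundary count strictly exceeds $\ell$, giving the desired contradiction.

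The hard part will be the amortized counting for fully absorbed cycles: when several consecutive $C_i$ lie entirely in $S^\OUT$, no direct boundary vertex is produced by those cycles themselves, and one must carefully track the subdivided-wall rungs and use the degree bound $\Delta$ to convert each absorbed annulus into at least one fresh element of $N_{G^\UN}(S^\OUT)$ disjoint from the cycle-based contributions. This is precisely the step where the factor $4\Delta+2$ in the definition of $\ell$ is consumed, and it is essentially the only place where the argument deviates from the planar version in \cite{chekuri2016constant}.
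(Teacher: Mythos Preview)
Your first paragraph is fine and matches the paper: flatness of $\Gamma$ makes each $C_i$ a separator in $G^\UN$, so connectivity of $S^\OUT$ forces $S^\OUT\cap V(C_i)\neq\emptyset$ for all $1\le i\le \ell+1$.

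The gap is in your treatment of fully absorbed cycles. Your amortized claim that ``any maximal run of fully absorbed cycles must end within $2\Delta+1$ steps'' is unsupported and in fact false in this setting: nothing in the local wall structure or the degree bound prevents \emph{all} $\ell+1$ cycles (and all rungs between them) from lying inside $S^\OUT$. In that scenario your charging scheme produces no boundary vertices at all from the wall interior, and the factor $4\Delta+2$ does not rescue it; that factor is consumed later (in converting undirected cycles to directed ones), not in this lemma. The crucial ingredient you never invoke is the defining property $S^\OUT\cap X=\emptyset$ together with the existence of many vertex-disjoint paths from the wall to $X$.

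The paper's argument is shorter and avoids the amortization entirely. From the $\ell+1$ pairwise disjoint cycles each meeting $S^\OUT$, pigeonhole against $|N_{G^\UN}(S^\OUT)|\le\ell$ yields a single cycle $C_i$ with $V(C_i)\subseteq S^\OUT$. There are $r>\ell$ vertex-disjoint paths in $G^\UN$ from $C_i$ to $X$ (via the wall columns together with the given family of paths from $X$ to the interface of $\Gamma$); since $X\cap S^\OUT=\emptyset$, each such path contributes a distinct vertex to $N_{G^\UN}(S^\OUT)$, forcing $|N_{G^\UN}(S^\OUT)|>\ell$, a contradiction. So rather than charging every cycle, you should use one fully absorbed cycle and the global connectivity to $X$.
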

\begin{proof}
The proofs for $S^\IN$ and $S^\OUT$ are symmetrical, so we focus on the case of $S^\OUT$. Assume that $S^\OUT$ contains a vertex enclosed by $C_{\ell+1}$, and we will find a contradiction. Since $v^{out} \in S^\OUT$, $S^\OUT$ is connected in $G^{\UN}$, and $\Gamma$ is a flat wall, it must be that $S^\OUT$ contains a vertex from every cycle $C_i$, $1 \leq i \leq \ell+1$. Since $|N_{G^{\UN}}(S^\OUT)| \leq \ell$, for some $1 \leq i \leq \ell+1$ we have that $V(C_i)$ is completely contained in $S^\OUT$. However, there are $r > \ell$ vertex-disjoint paths in $G^{\UN}$ connecting $C_i$ with $X$. Thus, either $S^\OUT \cap X \neq \emptyset$ or $|N_{G^{\UN}}(S^\OUT)| > \ell$, both of which are contradictions.
\end{proof}

We are almost ready to prove the main result of this section. We will make use of the following Lemma, which is implicit in \cite{chekuri2016constant}. Note that sets ${S'}^\IN$ and ${S'}^\OUT$, the concentric cycles $C'_1, \ldots, C'_{r'}$, and integers $r'$ and $\Delta'$ in the next Lemma are defined for a planar graph $G'$ as described in \cite{chekuri2016constant}. 

\begin{lemma}\label{lem:planar_dir_cycles}
Let $G'$ be an Eulerian, planar directed graph, with sets ${S'}^\IN, {S'}^\OUT$ separated by concentric cycles $C'_1, \ldots, C'_{r'}$, and let $\ell' = \lfloor r' / (4\Delta' + 2) \rfloor$, where $\Delta'$ is the the maximum in-degree of $G'$. Then one can in polynomial time find $\lceil \ell' / 2 \rceil$ node-disjoint directed concentric cycles, all going in the same direction (all clockwise or all counter-clockwise), such that all vertices of ${S'}^\IN$ are strictly enclosed by the innermost cycle, and all vertices of ${S'}^\OUT$ are not enclosed by the outermost cycle, or vice versa, with the roles of ${S'}^\IN$ and ${S'}^\OUT$ swapped.
\end{lemma}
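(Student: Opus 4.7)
The plan is to promote each undirected concentric cycle $C'_i$ to a nearby directed cycle using the Eulerian property of $G'$, then discard most of them to obtain a vertex-disjoint family all oriented the same way.

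\textbf{Step 1: A directed cycle in a local band around each $C'_i$.} For each $i$, I would build a directed cycle $D_i$ that (a) encloses ${S'}^\IN$ and does not enclose ${S'}^\OUT$, and (b) uses only vertices of $G'$ lying in a band of width at most $2\Delta' + 1$ around $C'_i$ (in the undirected distance of ${G'}^\UN$). The existence of $D_i$ is the core topological step: because in-degree equals out-degree at every vertex, the total number of directed edges crossing the Jordan curve traced by $C'_i$ from inside to outside equals the number crossing from outside to inside. This balance, combined with the bipartiteness of the dual of a planar Eulerian digraph, allows one to produce a nonzero directed circulation supported near $C'_i$. Decomposing this circulation into directed cycles, the annular topology forces at least one cycle to encircle the interior region; this is the desired $D_i$.

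\textbf{Step 2: Disjoint sampling.} Each $D_i$ lies within a band of width $2\Delta' + 1$ around $C'_i$. Since the $C'_i$ are node-disjoint concentric cycles, bands around $C'_i$ and $C'_j$ are node-disjoint whenever $|i - j| \geq 4\Delta' + 2$. Hence, taking indices $i_1 = 4\Delta' + 2,\ i_2 = 2(4\Delta' + 2),\ \ldots$, I obtain $\ell' = \lfloor r'/(4\Delta' + 2) \rfloor$ pairwise vertex-disjoint concentric directed cycles $D_{i_1}, \ldots, D_{i_{\ell'}}$ separating ${S'}^\IN$ from ${S'}^\OUT$.

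\textbf{Step 3: Common orientation.} Each $D_{i_j}$ is oriented either clockwise or counter-clockwise with respect to ${S'}^\IN$. By pigeonhole, at least $\lceil \ell'/2 \rceil$ of them share the same orientation; keeping only those yields the required family. The construction of Step 1 can be implemented by standard circulation and shortest-cycle subroutines on planar digraphs, so the whole procedure runs in polynomial time.

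\textbf{Main obstacle.} The crucial and non-routine step is Step 1: showing that each undirected concentric cycle admits an enclosing directed cycle whose support is confined to a band whose width depends only on $\Delta'$ and not on $r'$ or $|X|$. The rest is bookkeeping. This step relies essentially on the planar Eulerian structure, and any weakening of either the Eulerian hypothesis or the planarity hypothesis (e.g.\ porting directly to genus-$g$ embeddings without additional work) breaks the argument.
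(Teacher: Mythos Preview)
The paper itself does not prove this lemma; it merely records it as ``implicit in \cite{chekuri2016constant}'' and uses it as a black box. So there is no in-paper argument to compare against, only the one in \cite{chekuri2016constant}. Your Steps~2 and~3 (spacing the indices by $4\Delta'+2$ to get disjointness, then pigeonholing on orientation) are correct bookkeeping and do match the shape of that argument.

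The gap is exactly where you flag it: Step~1 is the entire content of the lemma, and what you wrote is not a proof. Two concrete issues. First, ``bipartiteness of the dual of a planar Eulerian digraph'' is not a defined notion here; the relevant fact is that a connected plane Eulerian digraph admits a well-defined integer \emph{height function} $\phi$ on faces (set $\phi(f_\OUT)=0$ and increase by~$1$ when crossing an edge from its right face to its left face; this is consistent precisely because in-degree equals out-degree at every vertex). The level sets of $\phi$ are edge-disjoint directed cycles, and any dual path from the outer to the inner region witnesses $|\phi(f_\IN)-\phi(f_\OUT)|$ many separating directed cycles. You have not invoked or proved anything like this; ``decomposing a nonzero circulation'' does not by itself force any cycle in the decomposition to be homotopically essential in the annulus. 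Second, even granting one separating directed cycle per band, your width claim is stated in terms of the graph metric of ${G'}^\UN$, whereas the disjointness in Step~2 is argued via the cycle index~$i$. These agree only because any path of length $\le 2\Delta'+1$ from $C'_i$ can cross at most $2\Delta'+1$ of the disjoint cycles $C'_j$, so the band lies between $C'_{i-(2\Delta'+1)}$ and $C'_{i+(2\Delta'+1)}$; you should say this explicitly rather than leave the two notions of ``band'' conflated.

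In short: the architecture is right, but Step~1 currently asserts the conclusion. To make it a proof you need the face-potential argument (or an equivalent) to produce a separating directed cycle inside each block of $O(\Delta')$ consecutive undirected cycles, together with a reason the resulting family can be taken vertex-disjoint rather than merely edge-disjoint; this last point is where the degree bound $\Delta'$ actually enters, and it is not addressed in your sketch.
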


We will use Lemma \ref{lem:planar_dir_cycles} to find concentric cycles in minor-free graphs. We first generalize the notion of \emph{enclosed} for flat grids in non-planar graphs.
Let $H$ be a directed graph with a flat grid minor $\eta$. Let $u^\OUT$ an arbitrary vertex not contained in $\eta$. Let $C$ be some cycle contained within $\eta$. We say a vertex $u$ is \emph{enclosed} by $C$ if all paths in $H^{\UN}$ from $u$ to $u^\OUT$ intersect $C$.
We now find the desired concentric cycles in $G$.

\begin{lemma}\label{lem:main_lem}
One can in polynomial time find $\lceil \ell / 2 \rceil$ node-disjoint directed concentric cycles, all going in the same direction (all clockwise or all counter-clockwise), such that all vertices of $S^\IN$ are enclosed by the innermost cycle, and all vertices of $S^\OUT$ are not enclosed by the outermost cycle, or vice versa, with the roles of $S^\IN$ and $S^\OUT$ swapped.
\end{lemma}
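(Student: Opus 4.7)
The plan is to reduce the minor-free case to the planar case so that Lemma~\ref{lem:planar_dir_cycles} can be invoked as a black box. The flatness of $\Gamma$ is precisely what lets us carve out a planar directed subgraph carrying the concentric cycle structure.

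First I would construct an auxiliary planar directed graph $G'$. Let $\Phi$ be a planar drawing of $\Gamma$ witnessing flatness, and consider the subgraph of $G$ induced on $V(\Gamma) \cup S^\OUT \cup S^\IN$. Contract $S^\OUT$ to a single new vertex $s^\OUT$ and $S^\IN$ to a single new vertex $s^\IN$. By the preceding separation lemma, $S^\OUT$ avoids all vertices enclosed by $C_{\ell+1}$ and $S^\IN$ lies strictly inside $C_{r-\ell}$, so $s^\OUT$ can be placed inside the outer face of $\Phi$ and $s^\IN$ inside the face bounded by $C_r$. Because $\Gamma$ is flat, every edge between $V(\Gamma)$ and a vertex of $S^\OUT$ (respectively $S^\IN$) attaches to a vertex on the corresponding boundary of $\Phi$, so these contractions extend $\Phi$ to a planar drawing of $G'$.

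Next I would verify the hypotheses of Lemma~\ref{lem:planar_dir_cycles}. Since $G$ is Eulerian and each isle is entered in $G$ as often as it is exited, $G'$ is Eulerian; the maximum in-degree at non-contracted vertices is still at most $\Delta$, and the concentric undirected cycles $C_1, \ldots, C_r$ of $\Gamma$ survive in $G'^\UN$ and separate $s^\OUT$ from $s^\IN$ with the same parameter $\ell$. Applying Lemma~\ref{lem:planar_dir_cycles} to $G'$ with isles $\{s^\OUT\}$ and $\{s^\IN\}$ produces $\lceil \ell/2 \rceil$ node-disjoint directed concentric cycles in $G'$, all in the same rotational direction, with $s^\IN$ strictly enclosed by the innermost and $s^\OUT$ not enclosed by the outermost (or with the roles swapped). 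These cycles avoid $s^\IN$ and $s^\OUT$, so they lift to honest directed cycles of $G$ using only vertices of $V(\Gamma)$.

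The last step is to check that the planar notion of enclosure in $G'$ coincides with the generalized enclosure in $G$ defined just before the lemma. Fix a witness vertex $u^\OUT$ outside $V(\Gamma)$ and a cycle $C$ produced above. Any path in $G^\UN$ from an enclosed vertex to $u^\OUT$ must leave $V(\Gamma)$, and by flatness it can only do so along the outer face of $\Phi$; since $C$ is a cycle separating the outer face from $s^\IN$ in the planar drawing, such a path must cross $C$. Combined with the separation lemma, this places $S^\IN$ and $S^\OUT$ on the correct sides of the innermost and outermost cycles in $G$. The main obstacle I expect is justifying this final planar-vs-generalized enclosure correspondence: it hinges on using flatness to rule out ``shortcut'' paths that leave $V(\Gamma)$ through an interior face of $\Phi$ and reenter on the other side of $C$, which is exactly what the outer-face condition of flatness forbids. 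A secondary care point is ensuring that the contraction preserves the Eulerian property at $s^\IN$ and $s^\OUT$, which follows because the boundary edges between each isle and the rest of $G$ are themselves balanced in the Eulerian graph $G$.
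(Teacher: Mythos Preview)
Your high-level strategy---carve out a planar auxiliary digraph from the flat wall and invoke Lemma~\ref{lem:planar_dir_cycles}---is exactly the paper's, but two steps of your construction do not go through as written.

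\textbf{Eulerianity.} You pass to the subgraph of $G$ induced on $V(\Gamma)\cup S^\OUT\cup S^\IN$ and then contract the two isles. But the isles are \emph{not} all of $V(G)\setminus V(\Gamma)$: by definition an isle avoids $X$, and in general many vertices outside $\Gamma$ (for instance terminals) lie in neither $S^\OUT$ nor $S^\IN$. Passing to the induced subgraph therefore deletes vertices, and at any outer-face vertex of $\Gamma$ that was adjacent to a deleted vertex the in/out balance is destroyed. Your sentence ``each isle is entered as often as it is exited'' only checks balance at $s^\IN$ and $s^\OUT$, not at the surviving boundary vertices of $\Gamma$. The paper sidesteps this by contracting \emph{everything} on or outside $C_1$---the set $Z=\{v: v\in V(C_1)$ or $v$ is not in the component of $G\setminus V(C_1)$ containing $C_2\}$---to a single vertex $z$; since contraction (unlike deletion) preserves the Eulerian property, $G'=G/Z$ is Eulerian, and flatness of $\Gamma$ makes it planar.

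\textbf{Degree of the contracted vertex.} You note that non-contracted vertices keep in-degree at most $\Delta$, but Lemma~\ref{lem:planar_dir_cycles} uses the \emph{global} maximum in-degree $\Delta'$ of $G'$ to set $\ell'=\lfloor r'/(4\Delta'+2)\rfloor$. The contracted vertex can have in-degree on the order of $\ell\Delta$, which would drive $\ell'$ down to a constant and yield nothing. The paper deals with this explicitly: it replaces $z$ by a bidirected path of length equal to its in-degree, redistributes the former neighbors of $z$ along the path while keeping the drawing planar, and then adds balancing edges along the path to restore Eulerianity (Figure~\ref{fig:Eulerian}). After this gadget, $\Delta'=\Delta$ and one may take $\ell'=\ell$.

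With these two fixes in place, the remainder of your outline---lifting the directed cycles back to $G$ and matching the planar notion of enclosure with the generalized one via flatness---is in line with the paper's argument.
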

\begin{proof}
We proceed by creating $G'$ from $G$ as follows. Let
\[
Z = \left\{v \in V(G) : \mbox{$v \in V(C_1)$ or $v$ is not in the component of $G \setminus V(C_1)$ containing $C_2$}\right\}.
\]
Let 
\[
G' = G / Z,
\]
i.e. $G'$ is the graph created by identifying all vertices in $Z$ to a single vertex $z$. Since $G$ is Eulerian, $G'$ is also Eulerian. Furthermore, we can delete any self-loops on $z$, and $G'$ is still Eulerian. Since $C_1, \ldots, C_r$ are contained within a flat grid minor of $G$, $G'$ is therefore a planar graph. The only impediment to directly applying Lemma \ref{lem:planar_dir_cycles} is that the in-degree $\delta$ of $z$ might be greater than $\Delta$. We can eliminate this by replacing $z$ with a path $P$ of length $\delta$, with edges directed both ways between adjacent vertices. We then connect the vertices formerly connected to $z$ to vertices in $P$, maintaining the planarity of $G'$. Then, to restore $G'$ as an Eulerian graph, for the vertices in $P$ with an imbalance between in- and out-degree we can create a new edge (See Figure \ref{fig:Eulerian}).

\begin{figure}[H]
\begin{center}
\scalebox{0.45}{\includegraphics{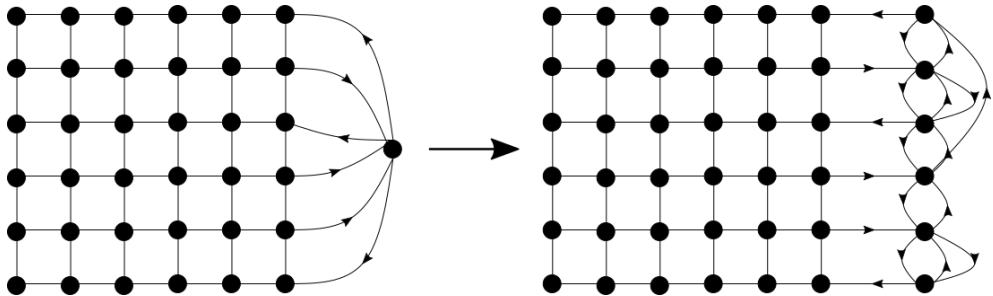}}
\caption{Maintaining a Eulerian graph with bounded degree.}
\label{fig:Eulerian}
\end{center}
\end{figure}

After these modifications, $G'$ is an Eulerian, planar digraph with maximum in-degree $\Delta$. Let $S'_\IN = S^\IN$ and $S'_\OUT = \left( S^\OUT \cap V(G') \right) \cup \{z\}$. We now apply Lemma \ref{lem:planar_dir_cycles} using $G'$, $S'_\IN$, and $S'_\OUT$ to find $\lceil \ell / 2 \rceil$ node-disjoint directed concentric cycles, all going in the same direction, and all vertices of $S'_\IN$ are strictly enclosed by the innermost cycle, and all vertices of $S'_\OUT$ are not enclosed by the outermost cycle. Clearly, each of these cycles exists in $G$, all vertices of $S^\IN$ are strictly enclosed by the innermost cycle, and all vertices of $S^\OUT$ are not enclosed by the outermost cycle.  
\end{proof}

We are now ready to obtain the proof of Theorem \ref{thm:main_crossbar}.

\begin{proof}[Proof of Theorem \ref{thm:main_crossbar}]
By Lemma \ref{lem:main_lem}, we can finish the construction of the crossbar with the same argument as in \cite{chekuri2016constant}.
\end{proof}

\section{Graphs of Bounded Genus}\label{sec:genus}
In this section we describe an algorithm to construct a flat grid minor $\Gamma$ of large enough size in graphs of bounded genus. The following is implicit in the work of Chekuri and Sidiropoulos \cite{chekuri2013approximation}.

\begin{lemma}\label{lem:grid_minor}
Let $G$ be an undirected graph of Euler genus $g \geq 1$, with treewidth $t \geq 1$. There is a polynomial time algorithm that computes a $r' \cross r'$-grid as a minor, with $r' = \Omega\left(\frac{t}{g^3\log^{5/2}n}\right)$. Furthermore, the algorithm does not require a drawing of $G$ as part of the input.
\end{lemma}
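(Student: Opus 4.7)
The plan is to reduce the bounded-genus case to the planar case, and then invoke the known algorithmic polylogarithmic grid minor theorem for planar graphs, absorbing the reduction cost into the $g^3$ factor.

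First, using the standard well-linked decomposition machinery of Chekuri--Khanna--Shepherd applied to the LP relaxation of a multicommodity flow instance, I would extract from $G$ an $\alpha$-node-well-linked set $X \subseteq V(G)$ of size $\Omega(t)$ witnessing the treewidth, for a fixed constant $\alpha$; this step is topology-free. Since a drawing is not given in the input but $g$ is assumed bounded, I would then invoke Mohar's polynomial-time genus algorithm to compute an embedding of $G$ on a surface $\Sigma$ of Euler genus $g$. This supplies the topological data needed for the next step while keeping the overall running time polynomial, and is precisely where the ``drawing-free input'' condition is handled.

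Next, I would planarize iteratively. Using a Cabello--Mohar-type routine, find a shortest non-separating simple cycle in the current embedded graph and delete its vertices, reducing the Euler genus by at least one per step. After $O(g)$ such deletions, the resulting subgraph $G_0 \subseteq G$ embeds in the plane. Each deletion may remove some terminals and degrade well-linkedness by forcing flow paths to be rerouted around the deleted cycle, but a standard concurrent-flow reroute argument shows that only a $\poly(g)$ multiplicative loss in well-linkedness is incurred per cut. Aggregating across the $O(g)$ cuts preserves an $\Omega(\alpha)$-well-linked subset $X' \subseteq X \cap V(G_0)$ of size $\Omega(t/g^3)$ in the planar subgraph $G_0$. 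I would then apply the algorithmic planar grid minor theorem of Chekuri--Khanna--Shepherd: from an $\Omega(1)$-well-linked set of size $s$ in a planar graph one can, in polynomial time, construct an $\Omega(s/\log^{5/2} n) \times \Omega(s/\log^{5/2} n)$ grid minor, using LP-based multicommodity flow arguments and the $O(\log n)$ flow-cut gap for planar graphs. Since any minor of a subgraph of $G$ is also a minor of $G$, this lifts to the claimed grid minor of $G$ of size $\Omega(t/(g^3 \log^{5/2} n))$.

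The main obstacle is the quantitative analysis of the planarization step: controlling exactly how much well-linkedness survives each genus-reducing cut. Because a non-separating cycle can cross many of the $\Omega(t)$ flow paths witnessing well-linkedness, one must reroute concurrent flows around the removed cycle with bounded congestion, and the length of the shortest non-separating cycle must be related to the current routing budget rather than merely to $n$. Budgeting $O(g)$ cuts against a per-cut loss of $O(g^2)$ in well-linkedness yields the $g^3$ factor; combining with the $\log^{5/2} n$ loss from the planar step delivers the stated bound. A secondary minor obstacle is ensuring that Mohar's embedding algorithm and the Cabello--Mohar cycle-finding routine together remain polynomial time in $n$ for each fixed $g$, which is standard.
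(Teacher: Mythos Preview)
The paper does not actually prove this lemma; it is stated as implicit in Chekuri--Sidiropoulos, so there is no in-paper argument to compare against directly. Your high-level plan---planarize by iteratively cutting non-separating cycles and then invoke a planar grid-minor theorem---is broadly in the spirit of that reference, but the sketch has a genuine gap precisely at the step you yourself flag as the ``main obstacle.''

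You assert that deleting a shortest non-separating cycle degrades the well-linked set by only a controlled amount and that after $O(g)$ rounds an $\Omega(\alpha)$-well-linked subset of size $\Omega(t/g^3)$ survives in the planar remainder, but you provide no mechanism for either claim. Nothing in your setup bounds the length of a shortest non-separating cycle in terms of $t$ and $g$; a priori it can be $\Theta(n)$, and deleting it can then destroy well-linkedness outright---there is no ``reroute around the removed cycle with bounded congestion'' once those vertices are gone, since every flow path certifying well-linkedness may pass through the cycle. Your budgeting ``$O(g)$ cuts against a per-cut loss of $O(g^2)$ yields $g^3$'' is also dimensionally unclear: a multiplicative per-cut loss compounds to $g^{O(g)}$, and you never establish what the per-cut loss actually is or why it should be $O(g^2)$. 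The argument that makes this approach work requires a dichotomy on the representativity (face-width) of the embedding: when face-width is large relative to $t/\poly(g)$ one carves out a large planarly embedded disk and finds the grid there directly; only when face-width is small does a non-contractible curve whose length is quantitatively tied to $t/\poly(g)$ exist, and cutting along it is then safe. Without supplying that dichotomy (or some other bound linking cycle length to $t$), your planarization step is unjustified and the proof does not go through as written. A secondary point: for planar graphs the treewidth--grid relation is linear and algorithmic, so attributing the entire $\log^{5/2} n$ loss to the planar step, as you do, does not match the known bounds; in the cited work the polylogarithmic loss arises earlier, in the well-linked/flow machinery on the bounded-genus graph itself.
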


We need to find a flat grid a minor for our purpose. Thomassen in \cite{thomassen1997simpler} shows that if a graph of genus $g$ contains a $m \times m$-grid as a minor, then it contains a $k \times k$ flat grid minor, where $m > 100k\sqrt{g}$. With some minor modifications, we can use this result to obtain the following.

\begin{lemma}\label{lem:thomassen}
Let $G$ be an undirected graph of Euler genus $g \geq 1$, and let $H$ be a $m \times m$ grid minor of $G$. Let $k < \frac{m}{100\sqrt{g}}$ be an integer. Then one can compute a $k \times k$ flat grid minor of $G$ in polynomial time.
\end{lemma}

\begin{proof}
We follow a similar approach as in \cite{thomassen1997simpler}. We need to construct a family of pairwise disjoint subgraphs $Q_1,Q_2,\cdots,Q_{2g+2}$ of $H$, satisfying the following conditions.
\begin{description}
\item{1.}
Each $Q_i$ is a $k\times k$ sub-grid of $H$.
\item{2.}
For any $i,j$ with $1 \leq i < j \leq 2g+2$, we have the following. If $x_i$ and $x_j$ are on the outer cycles of $Q_i$ and $Q_j$ respectively, and they have neighbors $y_i \in V(H) \setminus V(Q_i)$ and $y_j \in V(H) \setminus V(Q_j)$ respectively, then $H$ has a path $P_{ij}$ from $x_i$ to $x_j$ such that $V\left(P_{ij}\right) \cap \left(\bigcup\limits_{r = 1}^{2g+2} V(Q_r)\right) = \{x_i,x_j\}$.
\end{description}
Since we have $m > 100k\sqrt{g}$, this can be easily done as shown in Figure \ref{fig:grid}, which completes the proof.
\end{proof}

\begin{figure}
\begin{center}
\scalebox{0.25}{\includegraphics{grid_final.png}}
\caption{Decomposition of the grid minor}\label{fig:grid}
\end{center}
\end{figure}

\begin{lemma}\label{lem:flat_grid_minor}
Let $G$ be an undirected graph of Euler genus $g \geq 1$, with treewidth $t \geq 1$. There exists a polynomial time algorithm that computes a $r \times r$-grid as a minor, with $r = \Omega\left(\frac{t}{g^3\sqrt{g}\log^{5/2}n}\right)$. Moreover, the algorithm does not require a drawing of $G$ as part of the input.
\end{lemma}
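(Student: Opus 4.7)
The plan is to prove Lemma \ref{lem:flat_grid_minor} by simple composition of the two preceding lemmas in this section. Starting from the graph $G$ of Euler genus $g\geq 1$ and treewidth $t\geq 1$, I would first invoke Lemma \ref{lem:grid_minor} to compute, in polynomial time and without any embedding of $G$, a grid minor $H\subseteq G$ of size $r'\times r'$ with $r' = \Omega\!\left(\frac{t}{g^{3}\log^{5/2}n}\right)$. This takes care of producing a large (not necessarily flat) grid minor and already discharges the ``no drawing required'' requirement of the lemma.

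Next, set $k = \bigl\lfloor r'/(100\sqrt{g})\bigr\rfloor - 1$, so that $k < r'/(100\sqrt{g})$ as required by Lemma \ref{lem:thomassen}, and apply that lemma to the grid minor $H$ to obtain, in polynomial time, a $k\times k$ flat grid minor of $G$. Chaining the two bounds yields
\[
k \;=\; \Omega\!\left(\frac{r'}{\sqrt{g}}\right) \;=\; \Omega\!\left(\frac{t}{g^{3}\sqrt{g}\,\log^{5/2}n}\right),
\]
which is the promised bound with $r := k$. Polynomial running time is preserved since both Lemmas \ref{lem:grid_minor} and \ref{lem:thomassen} run in polynomial time, and neither step needs a drawing of $G$: Lemma \ref{lem:grid_minor} explicitly avoids it, and Lemma \ref{lem:thomassen} works entirely inside the combinatorially described grid minor $H$ together with a handshake/genus counting argument on subgrids that can be carried out without an explicit embedding of the host graph.

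There is essentially no hard step: the technical content lives in Lemmas \ref{lem:grid_minor} and \ref{lem:thomassen}, which have already been established in this section. The only minor thing to check is that the hypothesis $k < r'/(100\sqrt{g})$ of Lemma \ref{lem:thomassen} is satisfied and that $k\geq 1$; the former holds by construction and the latter holds once $t$ is larger than an absolute constant multiple of $g^{3}\sqrt{g}\log^{5/2}n$ (when $t$ is smaller, the conclusion is vacuous for an appropriate choice of the constant hidden in $\Omega(\cdot)$). This completes the argument.
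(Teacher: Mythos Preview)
Your proposal is correct and follows exactly the same approach as the paper, which simply states that the result is immediate from Lemmas~\ref{lem:grid_minor} and~\ref{lem:thomassen}. You have merely spelled out the chaining of the two bounds and the verification of the hypothesis $k < r'/(100\sqrt{g})$, which the paper leaves implicit.
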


\begin{proof}
This is immediate by Lemmas \ref{lem:grid_minor} and \ref{lem:thomassen}.
\end{proof}

Note that computing a large grid minor in the graph is not enough. We need to make sure that a large number of terminals can reach the interface of the grid minor. The following Lemma will provide for us the desired grid minor.
The proof of this Lemma is deferred to Appendix \ref{sec:missing-proofs}.

\begin{lemma}\label{lem:minor_free_flat}
Let ${\cal F}$ be some minor-closed family of graphs, let $\alpha\leq 1$, and $\beta>0$.
Suppose that there exists a polynomial-time algorithm which given, some $G'\in {\cal F}$ and some $\alpha$-node-well-lined set $X'$ in $G'$, outputs some $r'\times r'$ flat grid minor $\Gamma'$ in $G'$, for some $r'=\Omega(\alpha |X'|/\beta)$.
Then there exists a polynomial-time algorithm which, given some $G\in {\cal F}$ and some $\alpha$-node-well-lined set $X$ in $G$, outputs some $r \times r$ flat grid minor $\Gamma$ in $G$, for some integer $r = \Omega(\alpha|X| / \beta)$, and a family of $\lambda r$ node-disjoint paths in $G$ connecting $X$ to the interface of $\Gamma$, for some constant $0<\lambda<1$.
\end{lemma}

\begin{lemma}\label{lem:grid_connected}
Let $G$ be an undirected graph of genus $g$, and let $\alpha \leq 1$. Let $X$ be an $\alpha$-node-well-linked set in $G$. One can, in polynomial time, find some $r \times r$ flat grid minor $\Gamma$ in $G$, for some integer $r = \Omega\left(\frac{\alpha|X|}{g^3\sqrt{g}\log^{5/2}n}\right)$, and a family of $\lambda r$ node-disjoint paths connecting $X$ and the interface of $\Gamma$, for some $0 < \lambda \leq 1$.
\end{lemma}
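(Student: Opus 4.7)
The plan is to apply the reduction of Lemma~\ref{lem:minor_free_flat} to the minor-closed family $\mathcal{F}$ of undirected graphs of Euler genus at most $g$, using Lemma~\ref{lem:flat_grid_minor} as the subroutine that supplies its hypothesis.

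First, I would note that $\mathcal{F}$ is indeed minor-closed, a standard topological fact. Now, for any $G' \in \mathcal{F}$ and any $\alpha$-node-well-linked set $X'$ in $G'$, the well-linked/treewidth duality recalled in Section~\ref{sec:notation} (in its undirected form, obtained by combining the two bullets, or directly from \cite{REED1999222}) gives $\tw(G') = \Omega(\alpha|X'|)$. Feeding $G'$ into Lemma~\ref{lem:flat_grid_minor} then produces in polynomial time, without requiring a drawing, an $r' \times r'$ flat grid minor with
\[
r' \;=\; \Omega\!\left(\frac{\tw(G')}{g^{3}\sqrt{g}\,\log^{5/2} n}\right) \;=\; \Omega\!\left(\frac{\alpha |X'|}{g^{3}\sqrt{g}\,\log^{5/2} n}\right).
\]
Thus the hypothesis of Lemma~\ref{lem:minor_free_flat} is satisfied with $\beta = g^{3}\sqrt{g}\,\log^{5/2} n$.

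Invoking Lemma~\ref{lem:minor_free_flat} with this subroutine then yields, in polynomial time, both an $r \times r$ flat grid minor $\Gamma$ with $r = \Omega(\alpha|X|/\beta)$, matching the claimed bound, and an accompanying family of $\lambda r$ node-disjoint paths connecting $X$ to the interface of $\Gamma$ for some constant $\lambda \in (0,1)$. This is precisely the conclusion of Lemma~\ref{lem:grid_connected}.

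No serious obstacle is anticipated: Lemma~\ref{lem:minor_free_flat} already packages the nontrivial work of converting a ``merely existing'' flat grid minor into one whose interface is linked by many node-disjoint paths to the well-linked set, so the remaining argument is essentially bookkeeping. The only point worth double-checking is that $\beta = g^{3}\sqrt{g}\,\log^{5/2} n$ depends on $n$; however, the statement of Lemma~\ref{lem:minor_free_flat} treats $\beta$ as a generic positive parameter and the same $\beta$ is inherited in its conclusion, so this dependence is admissible and preserves both the claimed $r$ and the polynomial running time.
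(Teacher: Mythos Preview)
Your proposal is correct and follows essentially the same route as the paper: the paper's proof simply states that the lemma ``is immediate by combining Lemmas~\ref{lem:flat_grid_minor} and~\ref{lem:minor_free_flat}.'' You have merely spelled out the bookkeeping (minor-closedness of bounded-genus graphs, the well-linked/treewidth bound, and the choice $\beta = g^{3}\sqrt{g}\,\log^{5/2} n$) that the paper leaves implicit.
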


\begin{proof}
This is immediate by combining Lemmas \ref{lem:flat_grid_minor} and \ref{lem:minor_free_flat}.
\end{proof}

Now by Lemmas \ref{lem:reduction}, \ref{lem:eulerian}, and \ref{lem:grid_connected} we get the following result.

\begin{theorem}\label{thm:genus}
Let $G$ be a graph of genus $g$. There is a polynomial time randomized  algorithm that, with high probability, achieves an $\Omega\left(\frac{1}{g^3\sqrt{g}\log^{5/2}(n)}\right)$-approximation with congestion $5$ for \sdndp instances in $G$, where $h$ is an integer dependent only on $H$.
\end{theorem}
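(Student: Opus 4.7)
The plan is to mirror the proof of Theorem \ref{thm:free}, but replace the minor-free flat-grid lemma (Lemma \ref{lem:flat_in_minor_free}) with its bounded-genus counterpart (Lemma \ref{lem:grid_connected}), keeping the rest of the pipeline identical. Explicitly, I would chain three ingredients: the black-box reduction of Lemma \ref{lem:reduction}, the bounded-genus flat-grid-minor construction of Lemma \ref{lem:grid_connected}, and the Eulerian routing scheme of Lemma \ref{lem:eulerian}, substituting the parameter $\beta = \Theta(g^3\sqrt{g}\log^{5/2}n)$ throughout in place of the $h^7\sqrt{h}\log^{5/2}n$ used for general minor-free graphs.

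First I would apply Lemma \ref{lem:reduction} to reduce the given \sdndp instance on the genus-$g$ graph $G$ to an $\Omega(1)$-node-well-linked instance on an Eulerian directed graph of maximum in-degree $\Delta = O(\log^2 k)$, losing an $O(\log^6 k)$ factor in the approximation. As noted immediately after Lemma \ref{lem:reduction}, the reduction is graph-class agnostic; one only needs that the class of genus-$g$ graphs is closed under the operations used (taking subgraphs, well-linked decomposition), which holds because Euler genus is monotone under subgraphs.

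Next, on the reduced Eulerian instance I would invoke Lemma \ref{lem:grid_connected} on the underlying undirected graph $G^{\UN}$. This produces in polynomial time a flat $r \times r$ grid minor $\Gamma$ with $r = \Omega(\alpha|X|/(g^3\sqrt{g}\log^{5/2}n))$, together with $\lambda r$ node-disjoint paths linking the terminal set $X$ to the interface of $\Gamma$. Since $X$ is $\alpha$-node-well-linked in $G$ it is also $\alpha$-node-well-linked in $G^{\UN}$, so the hypotheses of Lemma \ref{lem:grid_connected} are met. These outputs constitute exactly the input required by Theorem \ref{thm:main_crossbar} with $\beta = \Theta(g^3\sqrt{g}\log^{5/2}n)$, yielding the crossbar needed downstream.

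Finally I would apply Lemma \ref{lem:eulerian} to turn this crossbar into an $O(\Delta^2/(\beta\alpha^3))$-approximation with congestion $5$ for the reduced instance, and compose back through Lemma \ref{lem:reduction}, absorbing the $\poly\log(k)$ factors into the stated bound (since $k \leq n$), to obtain the claimed $\Omega(1/(g^3\sqrt{g}\log^{5/2}n))$-approximation with congestion $5$. The argument is essentially mechanical: all the real technical work has been done in establishing Theorem \ref{thm:main_crossbar} and Lemma \ref{lem:grid_connected}. The only substantive check — and the only place where the bounded-genus case diverges from the minor-free one — is that Lemma \ref{lem:grid_connected} supplies the correct parameter $\beta$ for genus-$g$ graphs, which follows directly from Lemmas \ref{lem:flat_grid_minor} and \ref{lem:minor_free_flat}. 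There is no real obstacle beyond this bookkeeping.
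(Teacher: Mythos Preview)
Your proposal is correct and follows essentially the same approach as the paper, which simply states that the result is immediate from Lemmas \ref{lem:reduction}, \ref{lem:eulerian}, and \ref{lem:grid_connected}. You have unpacked this one-line argument in more detail (including the observation that genus is preserved under the reduction), but the pipeline is identical.
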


\section{Minor Free Graphs}\label{sec:minor-free}

In this section we present the flat grid minor construction for minor-free graphs.
We first consider the problem on nearly embeddable graphs, and when we extend our solution to arbitrary minor-free graphs by dealing with sums of constant size.

\subsection{Nearly Embeddable Graphs}\label{sec:nearly}
In this subsection we work on nearly embeddable graphs.
 First we reduce the problem to the case of zero apices.

\begin{lemma}[Reduction to $(0,g,k,p)$-nearly embeddable graphs]\label{lem:reduction1}
Suppose that there is a polynomial time algorithm for \sdndp in $(0,g,k,p)$-nearly embeddable graphs that achieves a $\beta$-approximation with congestion $c$. Then there is a polynomial time algorithm for \sdndp in $(a,g,k,p)$-nearly embeddable graphs that achieves a $\beta/a$-approximation with congestion $c$.
\end{lemma}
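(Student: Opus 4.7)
The plan is to delete the $a$ apex vertices from $G$ and invoke the assumed $\beta$-approximation algorithm on the resulting $(0,g,k,p)$-nearly embeddable graph, arguing that removing the apices costs the optimum only an additive $O(a)$. More concretely, let $A \subseteq V(G)$ denote the apex set, so $|A|=a$ and, by the definition of a nearly embeddable graph, $G-A$ is $(0,g,k,p)$-nearly embeddable. Our algorithm runs the assumed $\beta$-approximation on the instance $(G-A,\mathcal{M})$ and returns the solution it produces; since $G-A \subseteq G$, the solution is automatically feasible in $G$ with the same congestion $c$.

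The main quantitative step is a capacity bound on $A$. Under node congestion $c$, each vertex of $A$ lies on at most $c$ routing paths, so the $a$ apices collectively host at most $ac$ paths in any feasible solution. In \sdndp each routed pair uses two paths (one in each direction), and any pair that touches $A$ contributes at least one path to this total; hence at most $ac$ pairs touch $A$ in any feasible solution on $G$. Applied to the optimum, this gives
\[
\OPT_{G-A} \;\ge\; \OPT_G - ac,
\]
where $\OPT_{G-A}$ is the optimum of the same demand set viewed as an instance on $G-A$. Invoking the assumed algorithm on $(G-A,\mathcal{M})$ therefore produces a routing of at least $\beta(\OPT_G - ac)$ pairs with congestion $c$.

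It remains to check the approximation ratio in two regimes. When $\OPT_G \ge 2ac$, we have $\OPT_G - ac \ge \OPT_G/2$, so the output of the reduction routes at least $\beta\,\OPT_G/2 \ge (\beta/a)\,\OPT_G$ pairs (assuming $a\ge 2$; the cases $a\le 1$ are immediate as the reduction is vacuous). When $\OPT_G < 2ac$, the optimum is bounded by a constant depending only on $H$ and $c$, so achieving the $(\beta/a)\,\OPT_G$ bound requires only a constant number of routed pairs; we can afford to enumerate all subsets of $\mathcal{M}$ of size at most $2ac$ and test each for feasibility in $G$ by a constant-commodity integer routing check, returning the largest feasible subset (and outputting the better of this and the reduction's output overall). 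The main (really the only) obstacle in the proof is the bookkeeping in this small-optimum regime; the large-optimum case is an elementary pigeonhole argument on apex capacity.
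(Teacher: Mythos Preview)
Your approach is the same as the paper's: remove the apex set $A$ and invoke the assumed algorithm on the $(0,g,k,p)$-nearly embeddable graph $G' = G \setminus A$. The paper restricts to the pairs $M' \subseteq M$ with both terminals outside $A$, applies the $\beta$-approximation to $(G', M')$, and asserts the $\beta/a$ ratio from the observation $|M| \le |M'| + a$; that last step is not spelled out (and as written compares demand-set sizes rather than optima), so your explicit inequality $\OPT_{G-A} \ge \OPT_G - O(a)$ is in fact the cleaner justification of the same idea.

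Two remarks. First, the optimum in \sdndp is measured with congestion~$1$ (node-disjoint paths), so each apex lies on at most one optimal path and hence touches at most one routed pair; thus $\OPT_{G-A} \ge \OPT_G - a$ rather than $\OPT_G - ac$. Your weaker bound is of course still valid. Second, your handling of the small-$\OPT$ regime assumes one can test in polynomial time whether a fixed constant-size set of pairs is routable node-disjointly in the directed graph $G$. That is the directed vertex-disjoint paths problem, which is NP-hard already for two pairs in general digraphs (Fortune--Hopcroft--Wyllie), and it is not clear it becomes tractable on nearly embeddable digraphs. For the paper's eventual application this is harmless---there $\beta = o(1)$, so when $\OPT_G = O(ac)$ the target $(\beta/a)\OPT_G$ is below $1$ and routing nothing suffices---but as a black-box reduction valid for all $\beta$ and $c$ the enumeration step needs a different justification. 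The paper itself simply does not address the small-$\OPT$ case.
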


\begin{proof}
Let $G$ be an $(a,g,k,p)$-nearly embeddable graph, and suppose that we are given a \sdndp instance $M = \{s_1t_1,\cdots,s_mt_m\}$ in $G$. Let $A \subseteq V(G)$ be the set of apices in $G$. Let $G' = G \setminus A$. Clearly, $G'$ is a $(0,g,k,p)$-nearly embeddable graph. Let $M' \subseteq M$ be the subset of source-terminal pairs that do not intersect $A$. $M'$ forms a \sdndp instance in $G'$, and thus we can get a $\beta$-approximation solution $S'$ with congestion $c$. Since $|M| \leq |M'| + a$, we have that $S'$ is a $\beta / a$-approximation solution with congestion $c$ for $M$ in $G$, as desired.
\end{proof}

Next we provide an algorithm for \sdndp in $(0,g,k,p)$-nearly embeddable graphs. Let $G$ be an $(0,g,k,p)$-nearly embeddable graph, and let $S$ be the bounded genus subgraph of $G$ on the surface; that is $S$ is obtained from $G$ by deleting all vortices. Let $X \subseteq V(G)$ be the set of terminals. Note that using Lemma \ref{lem:reduction} we can reduce the problem to the case where $X$ is $\alpha$-well-linked for some $\alpha \leq 1$. The following is implicit in \cite{demaine2008linearity}.

\begin{lemma}[Demaine and Hajiaghayi \cite{demaine2008linearity}]\label{lem:tw_S}
Let $t \geq 1$ be the treewidth of $G^\UN$, and let $t'$ be the treewidth of $S^\UN$. Then we have $t' \geq \frac{t}{(p+k)^3}$.
\end{lemma}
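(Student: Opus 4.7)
The plan is to show $t \le (p+k)^{3} t'$, which rearranges to the claimed inequality. To do this, I would start from an optimal tree decomposition $\mathcal{T}'=(T',\{X_u\}_{u\in V(T')})$ of $S^\UN$ of width $t'$, and show how to augment it into a tree decomposition $\mathcal{T}$ of $G^\UN$ whose width is at most $(p+k)^{3}$ times larger. Since $G$ differs from $S$ only by the $k$ vortices, all new vertices and edges come from gluing vortex pieces into $\mathcal{T}'$; the only question is the bag blow-up.

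First I would process each vortex $V_j$ individually. By the definition of a vortex, $V_j$ admits a path decomposition $(Y_1,\dots,Y_\ell)$ of width at most $p$ indexed by the boundary vertices $v_1,\dots,v_\ell$ of the face $F_j$ of $S$ where the vortex is attached, with $v_i\in Y_i$. The key structural observation is that for each $i$, the set $T_i\subseteq V(T')$ of nodes whose bag contains $v_i$ is a connected subtree, and since $v_iv_{i+1}$ is an edge of $S$, we have $T_i\cap T_{i+1}\neq\emptyset$. Hence the union $\bigcup_i T_i$ is a connected subtree. I would walk along this subtree and, at each node $u$ visited, enlarge the existing bag $X_u$ by the union of those vortex bags $Y_i$ whose subtree $T_i$ contains $u$. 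This makes the enlarged bag contain, for every vortex edge interior to $V_j$ whose endpoints' associated indices route through $u$, both endpoints simultaneously, which is exactly what is needed for the tree-decomposition axioms.

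The main obstacle, and where the $(p+k)^3$ factor comes in, is controlling how many vortex vertices pile onto a single node $u\in V(T')$ after doing this for all $k$ vortices. For a fixed vortex, the number of boundary indices $i$ whose subtree $T_i$ passes through a given $u$ can be bounded in terms of $p$ and the size of $X_u$ by exploiting that the vortex-bag indices intersecting $X_u$ form a bounded-size \emph{near-interval} (a consequence of the linear ordering of the path decomposition and the separator property of $\mathcal{T}'$), each contributing at most $p+1$ new vertices. Summing this contribution across all $k$ vortices multiplies the bound by $k$. Following the careful accounting of Demaine and Hajiaghayi~\cite{demaine2008linearity}, one obtains a total bag size at most $t'\cdot(p+k)^{3}$, yielding the desired $t\le (p+k)^{3}t'$. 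The most delicate part will be verifying that double-counting across vortices and the cyclic (rather than linear) nature of each face boundary each contribute only a constant-factor overhead, so that the exponent remains $3$ rather than some higher value.
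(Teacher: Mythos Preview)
The paper does not give a proof of this lemma at all: it is stated as ``implicit in \cite{demaine2008linearity}'' and used as a black box. So there is no ``paper's own proof'' to compare against; your write-up already goes further than the paper does.

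That said, as a self-contained argument your sketch does not quite close. The high-level strategy---take an optimal tree decomposition of $S^\UN$ and enlarge each bag by the relevant vortex path-decomposition bags---is indeed the standard one and is correct in outline. The gap is in the accounting. You assert that the number of vortex bags piled onto a single node $u$ is controlled because ``the vortex-bag indices intersecting $X_u$ form a bounded-size near-interval,'' but this is not automatic: the boundary vertices $v_i$ landing in $X_u$ need not be consecutive along the face cycle, and to maintain the connectivity axiom for interior vortex vertices one generally has to fill in \emph{all} intermediate $Y_i$'s between the extreme indices present, not just those whose $v_i$ already lies in $X_u$. Controlling this filling-in is exactly where the nontrivial work lies, and it is what produces the polynomial blow-up. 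You acknowledge this by deferring the final bound to ``the careful accounting of Demaine and Hajiaghayi,'' but that means your argument, like the paper's, ultimately rests on the cited reference rather than being an independent derivation of the exponent~$3$.

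If your goal is only to match the paper, a one-line citation suffices. If your goal is an actual proof, you need to make the interval/near-interval claim precise and carry out the counting yourself; as written, the cube in $(p+k)^3$ is asserted rather than derived.
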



\begin{lemma}\label{lem:minor_free_flat_grid_minor}
One can in polynomial time find a $r \cross r$ flat grid minor $\Gamma$ in $G^\UN$,with $r = \Omega\left(\frac{t}{g^3 \sqrt{g} (p+k)^3 \log^{5/2}n}\right)$. 
\end{lemma}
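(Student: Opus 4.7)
The plan is to chain together Lemma~\ref{lem:tw_S} (treewidth transfer from $G^\UN$ to the surface-embedded part $S^\UN$) and Lemma~\ref{lem:flat_grid_minor} (flat grid minor construction in bounded-genus graphs), with a separate argument to promote flatness from $S^\UN$ to $G^\UN$.

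First, I would apply Lemma~\ref{lem:tw_S} to deduce that the surface-embedded subgraph $S$ of $G$ has treewidth $\tw(S^\UN) \geq t/(p+k)^3$. Since $S$ is drawn on a surface of Euler genus at most $g$, applying Lemma~\ref{lem:flat_grid_minor} to $S^\UN$ produces in polynomial time a flat $r \times r$ grid minor $\Gamma$ of $S^\UN$ with
\[
r \;=\; \Omega\!\left(\frac{\tw(S^\UN)}{g^3\sqrt{g}\,\log^{5/2} n}\right) \;=\; \Omega\!\left(\frac{t}{g^3\sqrt{g}\,(p+k)^3\,\log^{5/2} n}\right),
\]
which is exactly the quantitative bound required. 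Because $S$ is a subgraph of $G$, $\Gamma$ is automatically a grid minor of $G^\UN$ as well, so the only remaining issue is flatness.

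Second, I would argue that $\Gamma$ is flat not merely with respect to $S^\UN$ but also with respect to $G^\UN$. The only edges of $G^\UN$ that are not already in $S^\UN$ are edges incident to vortex vertices, and each of the $k$ vortices is attached to the boundary of a single face of $S$. The flat grid minor $\Gamma$ sits in a disk region $D$ on the Euler-genus-$g$ surface hosting $S$, with the outer face of $\Gamma$'s planar drawing corresponding to $\partial D$. Whenever each of the $k$ vortex faces lies in the exterior of $D$, every vortex vertex adjacent to $V(\Gamma)$ can only attach to vertices sitting on $\partial D$, i.e., outer-face vertices of $\Gamma$, so flatness is preserved in $G^\UN$.

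The main technical obstacle is the case where one or more vortex faces overlap with the disk $D$, in which case a vortex vertex could a priori be adjacent to an interior vertex of $\Gamma$. I would address this by passing to a subgrid of $\Gamma$ whose hosting disk $D' \subseteq D$ avoids all $k$ vortex faces: since there are only $k$ such faces and each is a connected region on the surface, one can either apply a pigeonhole argument over the $\Omega(r^2)$ cells of $\Gamma$, or adapt the Thomassen-style construction of Lemma~\ref{lem:thomassen} by treating the $k$ vortex faces as extra obstacles playing a role analogous to additional handles (so that $2g+2+k$ disjoint sub-grids are constructed instead of $2g+2$). Either way, the asymptotic bound on $r$ is preserved and the resulting subgrid is a flat grid minor of $G^\UN$ of the claimed size. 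The delicate part is verifying that this vortex-avoidance step does not degrade the $(p+k)^3$ denominator beyond what Lemma~\ref{lem:tw_S} already contributes, which is where the bulk of the bookkeeping lives.
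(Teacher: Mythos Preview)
Your first paragraph is exactly the paper's proof: apply Lemma~\ref{lem:tw_S} to get $\tw(S^\UN)\ge t/(p+k)^3$, then apply Lemma~\ref{lem:flat_grid_minor} to the genus-$g$ graph $S^\UN$. The paper's proof consists of precisely these two sentences and stops there.

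Your second and third paragraphs go beyond the paper. The paper does not discuss the promotion of flatness from $S^\UN$ to $G^\UN$ at all; it simply asserts that the flat grid minor produced in $S^\UN$ ``gives the desired result'' in $G^\UN$. Your worry that a vortex face could sit inside the disk $D$ hosting $\Gamma$, thereby creating edges from vortex vertices to interior grid vertices, is a legitimate one, and the paper's proof is silent on it. Your proposed remedies (a pigeonhole over sub-grids, or augmenting the Thomassen construction of Lemma~\ref{lem:thomassen} with $k$ extra forbidden regions) are reasonable ways to patch this; note, though, that either fix naively costs an extra factor of order $k$ (or $\sqrt{k}$) in $r$, so the ``bookkeeping'' you flag is real and would slightly change the exponent on $h$ in the downstream bounds if carried through. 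In short: your core approach matches the paper, and the additional care you take addresses a point the paper glosses over rather than diverging from its method.
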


\begin{proof}
By Lemma \ref{lem:tw_S} we have that the treewidth of $S^\UN$ is at least $\frac{t}{(p+k)^3}$. $S^\UN$ is a graph of Euler genus $g$, and thus by Lemma \ref{lem:flat_grid_minor} we get the desired result.
\end{proof}

\begin{lemma}\label{lem:one_vortex_main}
One can in polynomial time find some $r \cross r$ flat grid minor $\Gamma$ in $G^\UN$, for some integer $r = \Omega\left(\frac{t}{g^3 \sqrt{g} (p+k)^3 \log^{5/2}n}\right)$, and a family of $r$ node-disjoint paths connecting $X$ and the interface of $\Gamma$.
\end{lemma}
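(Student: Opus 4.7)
The plan is to combine the grid-minor existence statement of Lemma \ref{lem:minor_free_flat_grid_minor} with the black-box linking reduction Lemma \ref{lem:minor_free_flat}. Set $\beta = g^3 \sqrt{g} (p+k)^3 \log^{5/2} n$. First I would verify that the class $\mathcal{F}$ of $(0,g,k,p)$-nearly embeddable graphs (for the specific $g,k,p$ at hand) is minor-closed: edge deletions are immediate, while for edge contractions the Euler genus of the embedded part cannot increase, no new vortices are created, and contracting an edge inside a vortex (or one joining a vortex-boundary vertex to a surface vertex) cannot increase the pathwidth of any vortex after absorbing the merged vertex into the existing vortex. Hence $\mathcal{F}$ qualifies as the input family to Lemma \ref{lem:minor_free_flat}.

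Next I would check that Lemma \ref{lem:minor_free_flat_grid_minor} exactly supplies the hypothesis of Lemma \ref{lem:minor_free_flat} with this $\beta$. Given any $G' \in \mathcal{F}$ and any $\alpha$-node-well-linked set $X' \subseteq V(G')$, the treewidth $t'$ of $(G')^\UN$ satisfies $t' = \Omega(\alpha|X'|)$ by the second fact recorded in Section \ref{sec:notation} (the well-linked set forces directed, hence undirected, treewidth to be large). Lemma \ref{lem:minor_free_flat_grid_minor} then outputs in polynomial time a flat $r'\cross r'$ grid minor with $r' = \Omega(t'/\beta) = \Omega(\alpha|X'|/\beta)$, which is the exact form required by the hypothesis of Lemma \ref{lem:minor_free_flat}.

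Finally I would invoke Lemma \ref{lem:minor_free_flat} on the actual input $(G,X)$ of the statement. Its conclusion delivers an $r \cross r$ flat grid minor $\Gamma$ of $G^\UN$ with $r = \Omega(\alpha|X|/\beta) = \Omega(t/\beta)$ together with $\lambda r$ node-disjoint paths from $X$ to the interface of $\Gamma$ for some constant $\lambda \in (0,1)$. Since $\alpha$ is taken to be a fixed constant and $\lambda$ can be absorbed into the $\Omega$-notation (replacing $r$ by $\lfloor \lambda r \rfloor$ and obtaining the claimed $r$-family of paths to the interface of a still-flat sub-grid), we obtain precisely the quantitative bound $r = \Omega\!\left(t/(g^3\sqrt{g}(p+k)^3\log^{5/2}n)\right)$ asserted in the lemma.

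The only non-routine step I anticipate is the minor-closedness check for $\mathcal{F}$, specifically the edge-contraction case that crosses the vortex/surface boundary; that aside, the lemma is a direct composition of results already available in the excerpt.
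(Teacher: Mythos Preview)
Your proposal is correct and follows essentially the same route as the paper, which simply says the result is immediate from Lemmas \ref{lem:minor_free_flat_grid_minor} and \ref{lem:minor_free_flat}. You have just made explicit the verification of the hypotheses (minor-closedness of the class, the treewidth bound $t'=\Omega(\alpha|X'|)$, and absorbing the $\lambda$ factor) that the paper leaves implicit; note also that the proof of Lemma \ref{lem:minor_free_flat} only ever deletes vertices, so closure under vertex deletion would already suffice and the contraction case you flag is not actually needed.
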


\begin{proof}
This is immediate by Lemmas \ref{lem:minor_free_flat_grid_minor} and \ref{lem:minor_free_flat}.
\end{proof}

Now by combining Lemmas \ref{lem:reduction}, \ref{thm:main_crossbar}, \ref{lem:one_vortex_main}, the crossbar construction and routing scheme in Section \ref{sec:crossbar}, we get the following result.


\begin{lemma}\label{lem:(0,g,1,p)}
Let $G$ be a $(0,g,k,p)$-nearly embeddable graph. There is a polynomial time randomized  algorithm that, with high probability, achieves an $\Omega\left(\frac{1}{g^3 \sqrt{g} (p+k)^3 \log^{5/2}n}\right)$-approximation with congestion $5$ for \sdndp instances in $G$, where $h$ is an integer dependent only on $H$.
\end{lemma}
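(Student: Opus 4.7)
The plan is to chain together Lemmas \ref{lem:reduction}, \ref{lem:one_vortex_main}, \ref{lem:eulerian}, and Theorem \ref{thm:main_crossbar}. First, I would apply Lemma \ref{lem:reduction} to the given \sdndp instance on $G$, producing an instance on an Eulerian directed graph $G'$ of maximum degree $\Delta = O(\log^2 k)$ in which the terminal set $X$ is $\Omega(1)$-node-well-linked, at the cost of an $O(\log^6 k)$ factor in the final approximation ratio. A small subtlety to check here is that $G'$ inherits the $(0,g,k,p)$-nearly embeddable structure from $G$; this is the case because the well-linked decomposition of \cite{chekuri2016constant} is built from graph splittings and vertex deletions, operations that cannot create new minors and that preserve the near-embedding witnessed by the original surface drawing and vortex decompositions.

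Second, on $G'$ I would invoke Lemma \ref{lem:one_vortex_main} to extract a flat $r \times r$ grid minor $\Gamma$ of ${G'}^{\UN}$ of size $r = \Omega\left(\frac{t}{g^3\sqrt{g}(p+k)^3 \log^{5/2} n}\right)$ together with a family of $r$ node-disjoint paths linking $X$ to the interface of $\Gamma$. The input here is the lower bound $t = \Omega(\alpha |X|)$ on the treewidth of ${G'}^{\UN}$, which follows from the $\alpha$-well-linkedness of $X$ in $G'$ together with the fact noted in Section \ref{sec:notation} that an $\alpha$-well-linked set in a directed graph remains $\alpha$-well-linked in its underlying undirected graph.

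Third, I would feed $\Gamma$ and the linking paths into Theorem \ref{thm:main_crossbar} with $\beta = O\!\left(g^3\sqrt{g}(p+k)^3 \log^{5/2} n\right)$ to build the crossbar, and then apply Lemma \ref{lem:eulerian} to route an $\Omega\!\left(\alpha^3 / (\Delta^2 \beta)\right)$ fraction of the terminal pairs with congestion $5$. Substituting $\alpha = \Omega(1)$ and $\Delta = O(\log^2 k)$, and composing with the $O(\log^6 k)$ loss from Lemma \ref{lem:reduction}, yields an approximation ratio of order $1 / (\beta \cdot \polylog(n))$; absorbing the polylogarithmic factors in $k \leq n$ into the $\log^{5/2} n$ term gives the announced bound.

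The bulk of this argument is a direct composition of results already proven earlier, so the main obstacle is not conceptual but bookkeeping: verifying that the reduction of Lemma \ref{lem:reduction} preserves the $(0,g,k,p)$-nearly embeddable structure (so that Lemma \ref{lem:one_vortex_main} still applies to $G'$), and carefully tracking how the polylogarithmic factors accumulated through the chain of reductions fit into the single $\log^{5/2} n$ term of the stated bound.
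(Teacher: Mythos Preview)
Your proposal is correct and follows essentially the same approach as the paper: the paper simply states that the lemma follows by combining Lemmas~\ref{lem:reduction}, \ref{lem:one_vortex_main}, Theorem~\ref{thm:main_crossbar}, and the routing scheme of Section~\ref{sec:crossbar} (i.e., Lemma~\ref{lem:eulerian}), which is exactly the chain you spell out. Your additional remarks about verifying that the reduction preserves the $(0,g,k,p)$-nearly-embeddable structure and about tracking the accumulated polylogarithmic factors are legitimate bookkeeping points that the paper leaves implicit.
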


\begin{theorem}\label{thm:embeddable}
Let $G$ be a $(a,g,k,p)$-nearly embeddable graph.
There is a polynomial time randomized  algorithm that, with high probability, achieves an $\Omega\left(\frac{1}{ag^3 \sqrt{g} (p+k)^3 \log^{5/2}n}\right)$-approximation with congestion $5$ for \sdndp instances in $G$, where $h$ is an integer dependent only on $H$.
\end{theorem}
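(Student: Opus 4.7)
The proof is an immediate composition of Lemma \ref{lem:reduction1} and Lemma \ref{lem:(0,g,1,p)}. Lemma \ref{lem:(0,g,1,p)} already supplies, for every $(0,g,k,p)$-nearly embeddable graph, a polynomial-time randomized algorithm that with high probability attains an $\Omega\bigl(1/(g^3\sqrt{g}(p+k)^3\log^{5/2}n)\bigr)$-approximation for \sdndp with node congestion $5$. Lemma \ref{lem:reduction1} then lifts any $\beta$-approximation with congestion $c$ on apex-free nearly embeddable graphs to a $\beta/a$-approximation with congestion $c$ on $(a,g,k,p)$-nearly embeddable graphs. Chaining these two results gives precisely the claimed $\Omega\bigl(1/(ag^3\sqrt{g}(p+k)^3\log^{5/2}n)\bigr)$-approximation with congestion $5$.

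Operationally, the plan is as follows: given an input $(a,g,k,p)$-nearly embeddable graph $G$ together with a \sdndp instance $\mathcal{M}$, first identify the set $A$ of at most $a$ apices from the structural decomposition and delete them to obtain the apex-free graph $G'=G\setminus A$, which is $(0,g,k,p)$-nearly embeddable. Form the sub-instance $\mathcal{M}'\subseteq \mathcal{M}$ consisting of the pairs whose endpoints both lie in $V(G')$, and invoke the algorithm of Lemma \ref{lem:(0,g,1,p)} on $(G',\mathcal{M}')$ to obtain, with high probability, a routing with congestion $5$ of an $\Omega\bigl(1/(g^3\sqrt{g}(p+k)^3\log^{5/2}n)\bigr)$ fraction of $\OPT(G',\mathcal{M}')$. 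Any routing valid in $G'$ is also valid in $G$ at the same congestion, and since at most $a$ pairs of $\mathcal{M}$ can use an apex as an endpoint (by unit node capacity), $\OPT(G,\mathcal{M})\le \OPT(G',\mathcal{M}')+a$, which accounts for the additional factor of $a$ in the final ratio exactly as in the proof of Lemma \ref{lem:reduction1}.

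There is essentially no substantive obstacle beyond what has already been handled by the cited lemmas. The one non-trivial prerequisite is being able to access the apex set $A$, but a nearly embeddable decomposition of $G$ is computable in polynomial time for any fixed $H$ via the machinery behind Theorem \ref{thm:robertson}, so $A$ is available as part of the input structure. Thus, the theorem follows directly from Lemmas \ref{lem:reduction1} and \ref{lem:(0,g,1,p)}.
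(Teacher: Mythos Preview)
Your proposal is correct and matches the paper's own proof, which simply states that the theorem follows immediately from Lemmas \ref{lem:(0,g,1,p)} and \ref{lem:reduction1}. Your additional operational unpacking of how Lemma \ref{lem:reduction1} applies is accurate but more detailed than the paper provides.
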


\begin{proof}
This follows immediately by Lemmas \ref{lem:(0,g,1,p)} and \ref{lem:reduction1}. 
\end{proof}

\subsection{Dealing with $h$-sums}\label{sec:minorfree}
In this subsection we are going to prove Lemma \ref{lem:flat_in_minor_free}. Let $G$ be a minor-free graph, with treewidth $t$. Let $X \subseteq V(G)$ be the set of terminals. The following is implicit in \cite{demaine2004approximation}.

\begin{lemma}[\cite{demaine2004approximation}]\label{lem:summand}
Let $G_1,G_2$ be two undirected graphs, and let $G_3$ be an $h$-sum of $G_1$ and $G_2$ for some integer $h > 0$. Let $t_1$, $t_2$, and $t_3$ be the treewidth of $G_1$, $G_2$, and $G_3$ respectively. Then we have $t_3 \leq \max\{t_1,t_2\}$.
\end{lemma}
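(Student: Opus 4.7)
The plan is to glue together optimal tree decompositions of $G_1$ and $G_2$ along a bag that contains the shared clique. Let $K$ denote the shared clique, with $|V(K)| = k \leq h$, so that $G_3$ is obtained from the disjoint union of $G_1$ and $G_2$ by identifying two copies of $K$ and possibly deleting some edges inside the resulting shared clique.

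First, I would take optimal tree decompositions $(T_1, \{B_u^1\}_{u \in V(T_1)})$ of $G_1$ and $(T_2, \{B_u^2\}_{u \in V(T_2)})$ of $G_2$, of widths $t_1$ and $t_2$ respectively. The key ingredient is the standard fact that every clique of a graph is contained in some bag of every tree decomposition (a consequence of the Helly property on subtrees of a tree): hence there exist $u_1 \in V(T_1)$ and $u_2 \in V(T_2)$ with $V(K) \subseteq B_{u_1}^1$ and $V(K) \subseteq B_{u_2}^2$, where by slight abuse we identify the two copies of $K$ with its image in $G_3$.

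Next I would build a tree decomposition of $G_3$ by taking the disjoint union of $T_1$ and $T_2$ and adding the edge $\{u_1, u_2\}$, letting each bag $B_u^i$ carry over (using the identification of the two copies of $V(K)$). Call this new decomposition $(T_3, \{B_u^3\})$. The verification is routine: (i) every vertex of $G_3$ appears in some bag, since its preimage appears in $G_1$ or $G_2$; (ii) every edge of $G_3$ is covered, because every edge either comes from $G_1$, from $G_2$, or lies inside $K$ and is thus contained in both $B_{u_1}^1$ and $B_{u_2}^2$ (and deleting edges from $G_3$ within the shared clique preserves edge coverage); (iii) for each vertex $v \in V(G_3)$ the set of bags containing $v$ induces a subtree, which is clear for $v \notin V(K)$ by hereditary subtree-connectivity from $T_1$ or $T_2$, and for $v \in V(K)$ follows because the subtree of $T_i$ whose bags contain $v$ must contain $u_i$ (since $v \in V(K) \subseteq B_{u_i}^i$), and we have joined $u_1$ to $u_2$.

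Finally, the width is $\max_u |B_u^3| - 1 = \max\{t_1, t_2\}$, so $t_3 \leq \max\{t_1, t_2\}$. I do not expect a serious obstacle here; the only delicate point is the clique-in-a-bag fact, which I would either quote as standard or sketch via the Helly property for subtrees of a tree.
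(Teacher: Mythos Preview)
Your argument is correct and is the standard proof of this fact. The paper itself does not give a proof of this lemma; it simply attributes the statement to \cite{demaine2004approximation} as implicit there, so there is nothing to compare against beyond noting that your glue-two-tree-decompositions-along-a-clique-bag argument is exactly the classical one.
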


We are now ready to prove our result for computing flat grid minors in minor-free graphs.

\begin{proof}[Proof of Lemma \ref{lem:flat_in_minor_free}]
By using Theorem \ref{thm:robertson}, we get a decomposition of $G^\UN$ into $h$-sums of $h$-nearly-embeddable graphs. By Lemma \ref{lem:summand}, we have that at least one summand $G'$ has treewidth at least $t$. Now $G'$ is a $h$-nearly-embeddable graph with treewidth $t$, and thus by Lemma \ref{lem:one_vortex_main} we get the desired flat grid minor.
\end{proof}

\bibliography{bibfile}

\begin{thebibliography}{10}

\bibitem{Aggarwal:1994:ERS:314464.314579}
Alok Aggarwal, Amotz Bar-Noy, Don Coppersmith, Rajiv Ramaswami, Baruch
  Schieber, and Madhu Sudan.
\newblock Efficient routing and scheduling algorithms for optical networks.
\newblock In {\em Proceedings of the Fifth Annual ACM-SIAM Symposium on
  Discrete Algorithms}, SODA '94, pages 412--423, Philadelphia, PA, USA, 1994.
  Society for Industrial and Applied Mathematics.

\bibitem{Awerbuch:1994:OAC:1398518.1398993}
B.~Awerbuch, R.~Gawlick, T.~Leighton, and Y.~Rabani.
\newblock On-line admission control and circuit routing for high performance
  computing and communication.
\newblock In {\em Proceedings of the 35th Annual Symposium on Foundations of
  Computer Science}, SFCS '94, pages 412--423, Washington, DC, USA, 1994. IEEE
  Computer Society.

\bibitem{doi:10.1137/S0097539792232021}
Andrei~Z. Broder, Alan~M. Frieze, and Eli Upfal.
\newblock Existence and construction of edge-disjoint paths on expander graphs.
\newblock {\em SIAM Journal on Computing}, 23(5):976--989, 1994.

\bibitem{Chekuri:2013:PAM:2627817.2627841}
Chandra Chekuri and Alina Ene.
\newblock Poly-logarithmic approximation for maximum node disjoint paths with
  constant congestion.
\newblock In {\em Proceedings of the Twenty-fourth Annual ACM-SIAM Symposium on
  Discrete Algorithms}, SODA '13, pages 326--341, Philadelphia, PA, USA, 2013.
  Society for Industrial and Applied Mathematics.

\bibitem{Chekuri2015}
Chandra Chekuri and Alina Ene.
\newblock The all-or-nothing flow problem in directed graphs with symmetric
  demand pairs.
\newblock {\em Mathematical Programming}, 154(1):249--272, Dec 2015.

\bibitem{chekuri2016constant}
Chandra Chekuri, Alina Ene, and Marcin Pilipczuk.
\newblock Constant congestion routing of symmetric demands in planar directed
  graphs.
\newblock In {\em LIPIcs-Leibniz International Proceedings in Informatics},
  volume~55. Schloss Dagstuhl-Leibniz-Zentrum fuer Informatik, 2016.

\bibitem{Chekuri:2004:AMF:1007352.1007383}
Chandra Chekuri, Sanjeev Khanna, and F.~Bruce Shepherd.
\newblock The all-or-nothing multicommodity flow problem.
\newblock In {\em Proceedings of the Thirty-sixth Annual ACM Symposium on
  Theory of Computing}, STOC '04, pages 156--165, New York, NY, USA, 2004. ACM.

\bibitem{Chekuri:2005:MFW:1060590.1060618}
Chandra Chekuri, Sanjeev Khanna, and F.~Bruce Shepherd.
\newblock Multicommodity flow, well-linked terminals, and routing problems.
\newblock In {\em Proceedings of the Thirty-seventh Annual ACM Symposium on
  Theory of Computing}, STOC '05, pages 183--192, New York, NY, USA, 2005. ACM.

\bibitem{Chekuri06ano}
Chandra Chekuri, Sanjeev Khanna, and F.~Bruce Shepherd.
\newblock An $o( \sqrt{n} )$ approximation and integrality gap for disjoint
  paths and unsplittable flow.
\newblock {\em Theory of Computing}, 2:2006, 2006.

\bibitem{chekuri2013approximation}
Chandra Chekuri and Anastasios Sidiropoulos.
\newblock Approximation algorithms for euler genus and related problems.
\newblock In {\em Foundations of Computer Science (FOCS), 2013 IEEE 54th Annual
  Symposium on}, pages 167--176. IEEE, 2013.

\bibitem{Chuzhoy:2007:HRC:1250790.1250816}
Julia Chuzhoy, Venkatesan Guruswami, Sanjeev Khanna, and Kunal Talwar.
\newblock Hardness of routing with congestion in directed graphs.
\newblock In {\em Proceedings of the Thirty-ninth Annual ACM Symposium on
  Theory of Computing}, STOC '07, pages 165--178, New York, NY, USA, 2007. ACM.

\bibitem{chuzhoy_et_al:LIPIcs:2015:5303}
Julia Chuzhoy and David H.~K. Kim.
\newblock {On Approximating Node-Disjoint Paths in Grids}.
\newblock In Naveen Garg, Klaus Jansen, Anup Rao, and Jos{\'e} D.~P. Rolim,
  editors, {\em Approximation, Randomization, and Combinatorial Optimization.
  Algorithms and Techniques (APPROX/RANDOM 2015)}, volume~40 of {\em Leibniz
  International Proceedings in Informatics (LIPIcs)}, pages 187--211, Dagstuhl,
  Germany, 2015. Schloss Dagstuhl--Leibniz-Zentrum fuer Informatik.

\bibitem{DBLP:journals/corr/ChuzhoyKL16}
Julia Chuzhoy, David H.~K. Kim, and Shi Li.
\newblock Improved approximation for node-disjoint paths in planar graphs.
\newblock {\em CoRR}, abs/1603.05520, 2016.

\bibitem{DBLP:journals/corr/ChuzhoyKN16}
Julia Chuzhoy, David H.~K. Kim, and Rachit Nimavat.
\newblock New hardness results for routing on disjoint paths.
\newblock {\em CoRR}, abs/1611.05429, 2016.

\bibitem{DBLP:journals/corr/abs-1208-1272}
Julia Chuzhoy and Shi Li.
\newblock A polylogarithimic approximation algorithm for edge-disjoint paths
  with congestion 2.
\newblock {\em CoRR}, abs/1208.1272, 2012.

\bibitem{demaine2008linearity}
Erik~D Demaine and MohammadTaghi Hajiaghayi.
\newblock Linearity of grid minors in treewidth with applications through
  bidimensionality.
\newblock {\em Combinatorica}, 28(1):19--36, 2008.

\bibitem{demaine2004approximation}
Erik~D Demaine, MohammadTaghi Hajiaghayi, Naomi Nishimura, Prabhakar Ragde, and
  Dimitrios~M Thilikos.
\newblock Approximation algorithms for classes of graphs excluding
  single-crossing graphs as minors.
\newblock {\em Journal of Computer and System Sciences}, 69(2):166--195, 2004.

\bibitem{JOHNSON2001138}
Thor Johnson, Neil Robertson, P.D. Seymour, and Robin Thomas.
\newblock Directed tree-width.
\newblock {\em Journal of Combinatorial Theory, Series B}, 82(1):138 -- 154,
  2001.

\bibitem{karp1975}
R~M Karp.
\newblock On the complexity of combinatorial problems.
\newblock {\em Networks}, 5:45--68, 1975.

\bibitem{kawarabayashipolylogarithmic}
Ken-ichi Kawarabayashi and Anastasios Sidiropoulos.
\newblock Polylogarithmic approximation for minimum planarization (almost).
\newblock In {\em Foundations of Computer Science (FOCS), 2017 IEEE 58th Annual
  Symposium on}, pages 779--788. IEEE, 2017.

\bibitem{Kolliopoulos98approximatingdisjoint-path}
Stavros~G. Kolliopoulos and Clifford Stein.
\newblock Approximating disjoint-path problems using greedy algorithms and
  packing integer programs (extended abstract).
\newblock pages 153--168. Springer-Verlag, 1998.

\bibitem{Peleg1989}
D.~Peleg and E.~Upfal.
\newblock Constructing disjoint paths on expander graphs.
\newblock {\em Combinatorica}, 9(3):289--313, Sep 1989.

\bibitem{Raghavan:1994:ERA:195058.195119}
Prabhakar Raghavan and Eli Upfal.
\newblock Efficient routing in all-optical networks.
\newblock In {\em Proceedings of the Twenty-sixth Annual ACM Symposium on
  Theory of Computing}, STOC '94, pages 134--143, New York, NY, USA, 1994. ACM.

\bibitem{REED1999222}
B.~Reed.
\newblock Introducing directed tree width.
\newblock {\em Electronic Notes in Discrete Mathematics}, 3(Supplement C):222
  -- 229, 1999.
\newblock 6th Twente Workshop on Graphs and Combinatorial Optimization.

\bibitem{ROBERTSON199565}
N.~Robertson and P.D. Seymour.
\newblock Graph minors. xiii. the disjoint paths problem.
\newblock {\em Journal of Combinatorial Theory, Series B}, 63(1):65--110, 1995.

\bibitem{robertson2003graph}
Neil Robertson and Paul~D Seymour.
\newblock Graph minors. xvi. excluding a non-planar graph.
\newblock {\em Journal of Combinatorial Theory, Series B}, 89(1):43--76, 2003.

\bibitem{thomassen1997simpler}
Carsten Thomassen.
\newblock A simpler proof of the excluded minor theorem for higher surfaces.
\newblock {\em Journal of Combinatorial Theory, Series B}, 70(2):306--311,
  1997.

\end{thebibliography}
\pagebreak
\appendix
\section{Missing Proofs}\label{sec:missing-proofs}

\begin{proof}[Proof of Lemma \ref{lem:minor_free_flat}]
Let $t$ be the treewidth of $G$. Since $X$ is $\alpha$-node-well-linked in $G$, we have that $t = \Omega(\alpha |X|)$. Let $\Gamma_0$ be an $r' \times r'$ flat grid minor in $G$, for some $r' = \Omega(\alpha |X|/\beta)$. If there is a family of $\lambda r_0$ node-disjoint paths connecting $X$ and the the interface of $\Gamma_0$, then we are done. Otherwise, we will find an \emph{irrelevant} vertex; that is a vertex $v \in V(G)$ such that deleting $v$ from $G$ does not affect the well-linkedness of $X$. Therefore, we can delete $v$ from $G$, and recursively call the process for finding flat grid minors, until we get the desired one.

Suppose that there is not a family of $\lambda r_0$ node-disjoint paths connecting $X$ and the interface of $\Gamma_0$. 
First we find a $r_0' \times r_0'$ sub-grid $\Gamma_0'$ of $\Gamma_0$ such that $r_0' = O(r_0)$ and $\Gamma_0'$ contains at most $\frac{\lambda r_0}{\alpha}$ terminals. For any minor $H$ of $G$, and for every $v \in V(H)$, let $\eta(v) \subseteq V(G)$ be the subset of vertices in $G$ corresponding to $v$. Let also $X_H = X \cap \eta(H)$. 
Since there is not a family of $\lambda r_0$ node-disjoint paths connecting $X$ and the interface of $\Gamma_0$, we can find a cut $C \subseteq E(G)$ in $G$, separating $X_{\Gamma_0}$ and the interface of $\Gamma_0$, with $|C| < \lambda r_0$. Now let $A_1,A_2,\ldots,A_m$ be the connected components of $G \setminus C$ that contain vertices of $X_{\Gamma_0}$ (See Figure \ref{fig:components}). We may assume w.l.o.g.~that $|V(A_1)| \geq |V(A_2)| \geq \ldots \geq |V(A_m)|$. Now let $Y,Z \subset X$ be two disjoint subsets of $X$ of equal size such that $X_{A_1} \subset Y$ and $X_{A_i} \subset Z$ for any $i \in \{2,3,\ldots,m\}$. Since $X$ is $\alpha$-node-well-linked, there exist a family $\cal{P}$ of $|Y|$ paths from $Y$ to $Z$ such that no vertex is in more than $1/\alpha$ of these paths. However, we have $X_{A_1} \subset Y$ and $X_{A_i} \subset Z$ for any $i \in \{2,3,\ldots,m\}$, and thus we have $|V(X_{A_2}) \cup \ldots \cup V(X_{A_m})| \leq |C| \frac{1}{\alpha} < \frac{\lambda r_0}{\alpha}$. Therefore, we can find a $\frac{r_0}{4} \cross \frac{r_0}{4}$ sub-grid $\Gamma_0'$ of $\Gamma_0$ such that $\Gamma_0'$ does not intersect $X_{A_1}$, and moreover there are at most $\frac{\lambda r_0}{\alpha}$ number of terminals in $\eta(\Gamma'_0)$.

\begin{figure}
\begin{center}
\scalebox{0.40}{\includegraphics{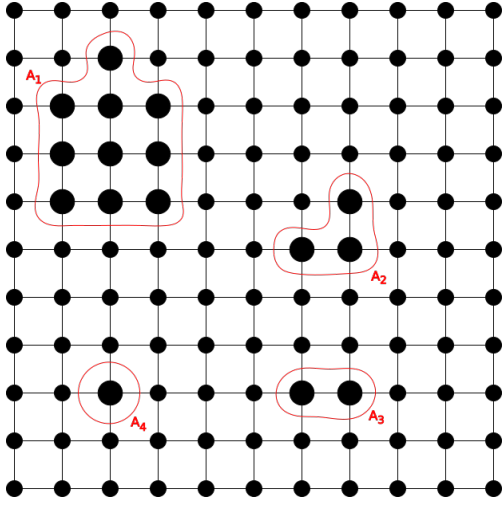}}
\caption{The connected components of $G \setminus C$ in $\Gamma'_0$}\label{fig:components}
\end{center}
\end{figure}

If there is a family of $\lambda r_0'$ node-disjoint paths connecting $X$ and the interface of $\Gamma_0'$, then we are done. Otherwise, we find an irrelevant vertex. We use a similar technique as in \cite{kawarabayashipolylogarithmic}. Let $\Gamma_0''$ be the $r''_0 \times r''_0$ sub-grid of $\Gamma_0'$ obtained by deleting the first and last $r_0'/4$ rows and columns of $\Gamma_0'$. By the construction, we know that $\Gamma_0''$ contains at most $\frac{\lambda r_0}{\alpha}$ terminals. We may assume w.l.o.g.~that $r'_0$ is a power of $2$, and thus $r''_0$ is a power of $2$ as well. We construct a hierarchical partitioning of $\Gamma''_0$ into smaller sub-grids as follows. For every $i,j \in \{1,2,\ldots,r''_0\}$, let $v_{i,j}$ be the vertex in the $i$'th row and $j$'th column of $\Gamma''_0$. For any $i,j,h \in \{1,2,\ldots,r''_0\}$, let 
\[
H_{i,j,h} = \bigcup\limits_{a = \max\{1,i-h-1\}}^{\min\{i+h,r''_0\}} \bigcup\limits_{b = \max\{1,j-h-1\}}^{\min\{j+h,r''_0\}}\{v_{a,b}\}.
\]
We also define $\ell(H_{i,j,h})$ = 2h. For every $q \in \{0,1,\ldots,\log r''_0\}$, we define two partitions of $\Gamma''_0$ into $q \times q$ sub-grids as follows. Let
\[
{\cal{H}}_{q,1} = \bigcup\limits_{i=0}^{r''_0/2^{q+1}} \bigcup\limits_{j=0}^{r''_0/2^{q+1}} \{H(i2^{q+1},j2^{q+1},2^q)\},
\]
and
\[
{\cal{H}}_{q,2} = \bigcup\limits_{i=0}^{r''_0/2^{q+1}} \bigcup\limits_{j=0}^{r''_0/2^{q+1}} \{H(i2^{q+1} + 2^q,j2^{q+1}+2^q,2^q)\}.
\]

Let ${\cal H} = \bigcup\limits_{q=0}^{\log r''_0} \bigcup\limits_{i=1}^{2} {\cal H}_{q,i}$. For every $H \in {\cal H}$, let $w(H)$ be the number of terminals in $\eta(H)$. Let also $w(\Gamma''_0)$ be the number of terminals in $\eta(\Gamma''_0)$. We say that some $H \in {\cal H}$ is \emph{dense} if $w(H) \geq \ell(H)/100$. Let $\delta(\Gamma''_0)$ be the interface of $\Gamma''_0$. We say that some $v \in V(\Gamma''_0)$ is \emph{good} if $v$ is not contained in any dense $H \in {\cal H}$, and there is no terminals in $\eta(v)$. First we show that there exists a good vertex in $\Gamma''_0$. We count the number of vertices in $\Gamma''_0$ that are contained in at least one dense $H \in {\cal H}$. Let ${\cal H}_{q,j} \in {\cal H}$ for some $q \in \{0,\ldots,\log r''_0\}$ and $j \in \{1,2\}$, and let $H \in {\cal H}_{q,j}$. $H$ is dense if and only if $w(H) \geq \ell(H)/100 = 2^{q+1}/100$. We know that $w(\Gamma''_0) \leq r''_0/10000$, and thus if $2^{q+1} > r''_0/100$, then there are no dense $H \in {\cal H}_{q,j}$. Now suppose that $2^{q+1} \leq r''_0/100$, and thus $q < \log r''_0 - 7$. Let $i \in \{8,\ldots,\log r''_0\}$, and let $q = \log r''_0 - i$. Let $H' \in {\cal H}_{q,1}$. We have that $\ell(H') = 2^{q+1} = r''_0/2^{i-1}$. In order for $H'$ to be dense it must be that $w(H') \geq \frac{\ell(H)}{100} = \frac{r''_0}{100\cdot 2^{i-1}}$. Note that we have $w(\Gamma''_0) \leq r''_0/10000$, and therefore there can be at most $2^{i-1}/100$ dense $H' \in {\cal H}_{q,1}$. With a similar argument, we can show that there can be at most $2^{i-1}/100$ dense $H' \in {\cal H}_{q,2}$. Now we have
\begin{align*}
\left|\bigcup\limits_{H \in {\cal H}: H \textit{ is dense}}H\right| &\leq 2 \cdot \sum\limits_{i=8}^{\log r''_0} \left(\frac{r''_0}{2^{i-1}}\right)^2 \cdot \frac{2^{i-1}}{100} \\
&= \frac{(r''_0)^2}{50} \cdot \sum\limits_{i=8}^{\log r''_0} \frac{1}{2^{i-1}} \\
&< \frac{(r''_0)^2}{50}.
\end{align*}
This means that there exist at least $\frac{49(r''_0)^2}{50}$ vertices in $\Gamma''_0$ that are not contained in any dense $H \in {\cal H}$, and since there are at most $r''_0/10000$ terminals in $\eta(\Gamma''_0)$, there must exist a good vertex in $\Gamma''_0$, as desired. Furthermore, this vertex can be found in polynomial time. Let $v \in V(\Gamma''_0)$ be a good vertex.


We claim that vertices in $\eta(v)$ are irrelevant. For every $q \in \{0,1,\ldots,\log r''_0\}$ and $i \in \{1,2\}$, let $H_{q,i} \in {\cal H}_{q,i}$ be a sub-grid that contains $v$. By the construction, for every $q \in \{0,1,\ldots,\log r''_0\}$, we have that either $d_{\Gamma''_0}(v,\delta(H_{q,1})) \geq 2^{q-1}$ or $d_{\Gamma''_0}(v,\delta(H_{q,2})) \geq 2^{q-1}$. Let $B_q \in \{H_{q,1},H_{q,2}\}$ be such that $d_{\Gamma''_0}(v,\delta(B_q)) \geq 2^{q-1}$. For every $q \in \{1,\ldots,\log r''_0\}$, let $C_q = B_q \setminus B_{q-1}$, and let also $C_{\log r''_0 +1} = V(\Gamma'_0) \setminus V(\Gamma''_0)$ (See Figure \ref{fig:anulus}).

\begin{figure}
\begin{center}
\scalebox{0.40}{\includegraphics{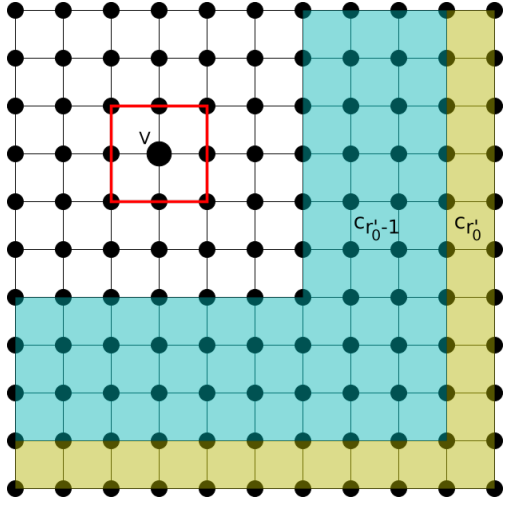}}
\caption{Sets $C_q$}\label{fig:anulus}
\end{center}
\end{figure}

Let $Y,Z \subset X$ be two disjoint subsets of $X$ of equal size. Since $X$ is $\alpha$-node well-linked, we know that there exist a family $\cal{P}$ of $|Y|$ paths from $Y$ to $Z$ such that no vertex is in more than $1/\alpha$ of these paths. If none of these paths use $v$, then we are done. Otherwise, we try to re-route these paths to obtain a new family $\cal{P}'$ of paths, such that no path is using $v$, and no vertex is in more than $1/\alpha$ of the paths in $\cal{P}'$. 
First we look at the paths $P \in {\cal P}$ with both endpoints outside of $\Gamma'_0$; that is the endpoints of $P$ do not belong to $\eta(\Gamma'_0)$. Let ${\cal P}^* \subseteq {\cal P}$ be the set of all such paths. We re-route them in a way such that they do not intersect $\eta(\Gamma''_0)$. Note that by the construction, at most $\lambda r'_0$ of paths in ${\cal P}^*$ can intersect $\eta(\Gamma'_0)$. For these paths, we can re-route their intersection with $\eta(\Gamma'_0)$ in $\eta(\Gamma'_0) \setminus \eta(\Gamma''_0)$, and thus they will not intersect $\eta(\Gamma''_0)$. Now let ${\cal P}^{**} \subseteq {\cal P}$ be the set of paths with one endpoint outside of $\eta(\Gamma'_0)$, and one endpoint inside of $\eta(\Gamma'_0)$. Let $P = (a_1,a_2,\ldots,a_p) \in {\cal P}^{**}$, where $a_1 \notin \eta(\Gamma'_0)$ and $a_p \in \eta(\Gamma'_0)$. Let $a_f \in V(P)$ be the first intersection of $P$ and $\eta(\Gamma'_0)$; that is $f \in \{1,2,\ldots, p\}$ is the minimum number such that $a_f \in \eta(\Gamma'_0)$. Let $P' = (a_f,\ldots,a_p)$. We replace $P$ with $P'$ in ${\cal P}$. Note that again there are at most $\lambda r'_0$ such paths in ${\cal P}$. Now we are only dealing with paths with both endpoints in $\eta(\Gamma'_0)$. For all such paths, we use an inductive argument to re-route them. For any $i,j \in \{1,2,\ldots,\log{r''_0}+1\}$, let ${\cal P}_{i,j} \subseteq {\cal P}$ be the paths with one endpoint in $\eta(C_i)$, and the other endpoint in $\eta(C_j)$. By the construction, for any $i \in \{1,2,\ldots,\log{r''_0}+1\}$, we know that there are at most $2^i / 20$ terminals in $\eta(C_i)$, and thus $|{\cal P}_{i,i}| \leq 2^i / 20$. For all such paths, we can re-route them such that they stay inside $C_i$. We start with ${\cal P}_{\log{r''_0}+1,\log{r''_0}+1}$, and re-route all these paths such that they only use vertices in $C_{\log{r''_0}+1}$. Again, by the construction, we have that $\left|\bigcup\limits_{j = 1}^{\log r''_0} {\cal P}_{\log r''_0 + 1 , j}\right| \leq r''_0/10$. For all $P \in \bigcup\limits_{j = 1}^{\log r''_0} {\cal P}_{\log r''_0 + 1 , j}$, similar to the paths in ${\cal P}^{**}$, we can replace them with paths with one endpoint on the boundary of $C_{\log r''_0}$, and recursively follow the same argument for paths with both endpoints in $\eta\left(\bigcup\limits_{j = 1}^{\log r''_0} C_j\right)$ and so on. Therefore, by applying the same re-routing pattern, we can get a new set of paths ${\cal P}'$ such that no path uses vertex $v$, as desired.

Now let $G_1 = G \setminus v$. Since $v$ is an irrelevant vertex in $G$, we have that $X$ is $\alpha$-node-well-linked in $G_1$, and thus we have that the treewidth of $G_1$ is $\Omega(\alpha |X|)$. Therefore, we can find a $r'_1 \cross r'_1$ flat grid minor $\Gamma_1$ in $G_1$, for some $r'_1 = \Omega\left(\frac{\alpha |X|}{\beta}\right)$. If there exists a family of $\lambda r'_1$ node-disjoint paths connecting $X$ and the interface of $\Gamma_1$, we are done. Otherwise, we recursively follow the same approach to find an irrelevant vertex $v_1$ in $G_1$, and let $G_2 = G_1 \setminus v_1$ and so on. This recursive call stops in $O(n)$ steps, because for each $i \geq 1$, $G_i$ is a graph of treewidth $\alpha |X|$. Therefore, for some $j \geq 1$, we can find a $r_j \times r_j$ flat grid minor $\Gamma_j$ of $G_j$, for some $r_j = \Omega\left(\frac{\alpha |X|}{\beta}\right)$, such that there exists a family of $\lambda r_j$ node-disjoint paths connecting $X$ and the interface of $\Gamma_j$. Note that $\Gamma_j$ is also a flat grid minor of $G$, and this completes the proof.
\end{proof}


\end{document}